\newcommand{\pto}{\rightharpoonup} 
\newcommand{\rto}{\rightarrow} 
\newcommand{\lto}{\leftarrow} 
\newcommand{\rul}[3]{#1 \lto #2 \rto #3} 
\newcommand{\Set}{\mathbf{Set}}
\newcommand{\Part}{\mathbf{Part}}
\newcommand{\A}{\mathbf{A}}
\newcommand{\G}{\mathbf{G}}
\newcommand{\AG}{\mathbf{AttG}}
\newcommand{\PAG}{\mathbf{PAttG}}
\newcommand{\Gr}{\mathbf{Gr}}
\newcommand{\C}{\mathbf{C}}
\newcommand{\Alg}{\mathbf{Alg}}
\newcommand{\Mod}{\mathbf{Mod}}
\newcommand{\Func}{\mathbf{Func}}
\newcommand{\comma}[2]{(#1\downarrow#2)} 
\newcommand{\step}[3]{\xymatrix{#1 \ar@{=>}[r]^{\mathrm{#2}} & #3 \\ } }
\newcommand{\stepp}[4]{\xymatrix{#1 \ar@{=>}[r]^{\mathrm{#2}}_{#3} & #4 \\ } }
\newcommand{\att}{\alpha} 
\newcommand{\Sp}{\mathit{Sp}} 
\newcommand{\D}{\mathcal{D}} 
\newcommand{\id}{\mathit{id}} 
\newcommand{\po}{\mathit{po}} 
\newcommand{\fpbc}{\mathit{fpbc}} 
\newcommand{\gr}{\mathit{gr}} 
\newcommand{\ag}[1]{\widehat{#1}} 
\newcommand{\ol}{\overline}
\newcommand{\ti}{\widetilde} 
\begin{document}

  \title{Transformation of Attributed Structures \\ 
         with Cloning\thanks{This work has been partly 
              funded by the project CLIMT of the French 
              \textit{Agence Nationale de la Recherche} 
              (ANR-11-BS02-016) and by the project TGV of 
              CNRS-INRIA-FAP's ($\#$ 156779).}
              (Long Version)} 
  \author{D. Duval\inst{1} and R. Echahed\inst{2} and F. Prost\inst{2}
    and L. Ribeiro\inst{3}}

 \institute{LJK - Université de Grenoble 
   \and LIG - Université de Grenoble 
   \and INF - Universidade Federal do Rio Grande do Sul}

  \maketitle \begin{abstract} 

Copying, or cloning, is a basic operation used in the specification of
many applications in computer science. However, when dealing with
complex structures, like graphs, cloning is not a straightforward
operation since a copy of a single vertex may involve (implicitly)
copying many edges. Therefore, most graph transformation approaches
forbid the possibility of cloning. We tackle this problem by
providing a framework for graph transformations with cloning. 
We use attributed graphs and allow rules to change
attributes.  These two features (cloning/changing attributes) together
give rise to a powerful formal specification approach. In order to
handle different kinds of graphs and attributes, we first define the
notion of attributed structures in an abstract way. Then we generalise
the sesqui-pushout approach of graph transformation in the proposed
general framework and give appropriate conditions under which
attributed structures can be transformed. Finally, we instantiate our
general framework with different examples, showing that many
structures can be handled and that the proposed framework allows one
to specify complex operations in a natural way.

\end{abstract}

\section{Introduction}

Graph structures and graph transformation have been successfully used
as foundational concepts of modelling languages in a wide range of
areas related to software engineering.  Such a success mainly stems
from the intuitive and pictorial features of graphs which ease the
writing as well as the understanding of specifications.  Several ways
to define graph transformation rules have been proposed (see e.g.,
\cite{handbook1,handbook2,handbook3} for a survey). We can distinguish
two main approaches: The algorithmic approach which is rather
pragmatic and defines graph transformation rules by means of the
algorithms used to transform the graphs (e.g.\cite{BVG87}) and the
algebraic approach which is more abstract
(e.g. \cite{EhrigPS73}). This latter borrows notions from category
theory to define graph transformation rules.
The most popular algebraic approaches are the double pushout (DPO)
\cite{EhrigPS73,CorradiniMREHL97} and the single pushout (SPO)
\cite{EhrigHKLRWC97}.

Very often, graph structures are endowed with attributes. Such
attributes, which enrich nodes and edges with data values, have been
proven very useful to enhance the expressiveness of visual modelling
frameworks (see, e.g., UML diagrams).  These attributes can be simple
names of an alphabet (labels) or elaborated expressions of a given
language.
Several investigations tackling attributed graph transformations have
been proposed in the literature, see
e.g. \cite{LKW93,HeckelKT02,Berthold2002,EhrigEPT06,OrejasL10,Golas12}.
These proposals follow the so-called double pushout approach to define
graph transformation steps. This approach can be used in many
applications (see e.g. \cite{CorradiniMREHL97}) but it forbids actions
which consist in cloning nodes together with their incident edges
(merging of nodes is also usually forbidden).  Moreover, this approach
also prevents the application of rules that erase a node when there
are edges connected to this node in the graph that represents the
state (erasing nodes is only possible if all connected arcs are
explicitly deleted by the rule). However, there are applications in
which these restrictions of DPO would lead to rather complex
specifications.  For instance, duplicating or erasing some component
may be very useful in the development process of an architecture, and
should be a simple operation.  Also, making a security copy of a
virtual machine in a cloud (for fault-tolerance reasons) is a very
reasonable operation, as well as switching down a (physical) machine
from the infrastructure of a cloud.  To model such situations we may
profit from cloning/merging as basic operations in a formalism. But we
certainly need to use attributed structures to get a suitable
formalism for real applications.  In this paper, we propose a
framework that has both the ability to model cloning/merging of
entities in a natural way, and also the feature of using attributes
 together with the graphs.
%

To develop our proposal, we  follow a more recent approach of graph
transformation known as the sesqui-pushout approach (SqPO)
\cite{CorradiniHHK06}.  
%
%
%
%
This latter is a conservative extension of DPO 
with some additional features such as deletion or cloning.

A rule is defined, as in the DPO approach, by means of a span of the
form $(l:\rul{L}{K}{R}:r)$ where the morphisms $l$ and $r$ are not
necessarily monos. The fact that $l$ is not mono allows one to
duplicate some nodes and edges. Notice that most proposals dealing
with attributed graphs assume $l$ to be mono. A rewrite step can be
depicted as follows where the left square is a final pullback
complement and the right square is a pushout. The intuition is
analogous to the DPO approach: the left square specifies what is
removed (and also what is cloned) by the rule application and the
right square creates the new items. The difference, besides allowing
non injective rules, is that when applying the rule the so called
dangling condition does not need to be checked: if there are edges in
$G$ connected to nodes in the image of $m$ that is deleted by the
rule, these edges are automatically removed by the rule
application. In DPO, in such a situation a rule would not be
applicable.

\vspace{-0.5cm}

$$ \begin{array}{c}
  \xymatrix@C=4pc@R=1.5pc{ \ar@{}[rd]|{(FPBC)}  
  L \ar[d]^{m} & K \ar@{}[rd]|{(PO)} \ar[l]_{l} \ar[r]^{r} \ar[d]^{d}  
  & R \ar[d]^{h} \\
  G & D \ar[l]_{l_1} \ar[r]^{r_1} 
  & H  \\
  } \\
  \mbox{Sesqui-pushout:} \step{G}{sqpo}{H}  \\ 
\end{array} $$

In order to consider different kinds of graphs and attributes, we
present our approach in a general setting. That is to say, we consider
structures of the form $\ag{G}=(G,A,\att)$ made of an object $G$ whose
elements may be attributed, an object $A$ defining attributes and a
partial function $\att$ which assigns to some elements of $G$
attributes in $A$. The fact that $\att$ is partial turns out to be
very useful to write transformation rules that change the attributes
of some elements of $G$ (see, e.g. \cite{HabelP02,Berthold2002}).  We
do not assume $G$ to be necessarily a graph nor do we assume $A$ to be
necessarily an algebra. We thus elaborate a framework which can be
instantiated with different kinds of structures and attributes
fulfilling some criteria we introduce in this paper. Therefore we can
handle different graphs with various kinds of attributes (algebras,
lambda-terms, finite labels, syntactic theories, etc.). Similar
objectives, with different outcome, have been recently investigated in
\cite{Golas12} for the DPO approach.

The rest of the paper is organized as follows. The next section
introduces the category of attributed structures and provides some
definitions which may help the understanding of the
paper. Section~\ref{sec:pre-sqpo} recalls briefly the useful
definitions regarding the sesqui-pushout approach. Then,
Section~\ref{sec:results} shows how to lift SqPO rewriting in the
context of attributed structures. Sections~\ref{sec:lambda} and
\ref{sec:cloud} illustrate our approach through some examples while
related work are discussed in
Section~\ref{sec:relatedwork}. Concluding remarks are given in
Section~\ref{sec:conclusion}. The missing proofs may be found in
the Appendix.
\vspace{-0.5cm}

\section{Attributed Structures}
\label{sec:pre-att}

In this section we define the notion of attributed structures 
and set some notations.

\noindent\textbf{Structures.}
Let $\G$ be a category and $S:\G\to\Set$ a functor  
from $\G$ to the category of sets.  
For instance, $\G$ may be the category of graphs $\Gr$  \cite{handbook1}
and $S$ may be either the \emph{vertex} functor $V$ 
defined by $V(G)=V_G$ and $V(g)=g_V$, 
or the \emph{edge} functor $E$ 
defined by $E(G)=E_G$ and $E(g)=g_E$, 
or the functor $V+E$ 
which maps each graph $G$ to the disjoint union $V_G+E_G$ and 
each morphism $g:G_1\to G_2$ to the map $g_V+g_E$. 

\noindent\textbf{Attributes.} 
Let $\A$ be a category and $T:\A \to \Set$ a functor 
from $\A$ to the category of sets. 
For instance, $\A$ may be the category $\Alg(\Sigma)$ 
of $\Sigma$-algebras \cite{ST2012} 
for some signature $\Sigma=(S,\Omega)$, 
or more generally the category $\Mod(\Sp)$ of models of 
an equational specification $\Sp=(\Sigma,E)$, 
made of a signature $\Sigma$ and a set of equations $E$. 
Then the functor $T:\A \to \Set$ may be 
such that $T(A)=\sum_{s\in S}A_s$, i.e., 
$T$ maps each $\Sigma$-algebra $A$ to the disjoint union of its carriers, 
or more generally $T(A)=\sum_{s\in S'}A_s$ for some fixed subset $S'$ of $S$. 
In the following, we sometimes write $Fx$ instead of $F(x)$ 
when a functor $F$ is applied to an object or a morphism  $x$.

\begin{definition}
The category of \emph{attributed structures} $\AG$ 
(with respect to the functors $S$ and $T$) 
is the comma category $\comma{S}{T}$.
Thus, an \emph{attributed structure} is a triple 
$\ag{G}=(G,A,\att)$ made of an object $G$ in $\G$, 
an object $A$ in $\A$ and 
a map $\att: S(G) \to T(A)$ (in $\Set$) ; 
and a \emph{morphism of attributed structures} 
$\ag{g}:\ag{G}\to\ag{G'}$,
where $\ag{G}=(G,A,\att)$ and $\ag{G'}=(G',A',\att') $, 
is a pair $\ag{g}=(g,a)$ made of 
a morphism $g:G\to G'$ in $\G$ 
and a morphism $a:A\to A'$ in $\A$ 
such that $\att' \circ Sg = Ta \circ \att$ (in $\Set$).
  $$ \xymatrix@C=3pc@R=1.5pc{
  \ag{G} \ar[d]_{\ag{g}} & \ar@{}[d]|{=} & 
    G \ar[d]_{g} & SG \ar[d]_{Sg} \ar[r]^{\att} & 
    TA \ar[d]^{Ta} & A \ar[d]^{a} \\ 
  \ag{G'} & & 
    G' & SG' \ar[r]^{\att'} & TA' \ar@{}[ul]|{=} & A' \\ 
  }$$  
\end{definition}

\textbf{Partial maps.} 
Let $\Part$ be the category of sets with partial maps,
which contains $\Set$. 
A partial map $f$ from $X$ to $Y$ is denoted $f:X\pto Y$ and 
its domain of definition is denoted $\D(f)$.
The partial order between partial maps is denoted $\leq$,
it endows $\Part$ with a structure of 2-category. 
By composing $S$ and $T$ with
the inclusion of $\Set$ in $\Part$ we get two functors  
$S_p:\G\to\Part$ and $T_p:\A \to \Part$. 

\begin{definition}
\label{defi:PAG}
The category of \emph{partially attributed structures} $\PAG$
(with respect to the functors $S$ and $T$) 
is defined as follows. 
A \emph{partially attributed structure} is a triple 
$\ag{G}=(G,A,\att)$ made of an object $G$ in $\G$,
an object $A$ in $\A$ and 
a partial map $\att: S_p(G) \pto T_p(A)$ (in $\Part$) ; 
and a \emph{morphism of partially attributed structures} 
$\ag{g}:\ag{G}\to\ag{G'}$,
where $\ag{G}=(G,A,\att)$ and $\ag{G'}=(G',A',\att') $, 
is a pair $\ag{g}=(g,a)$ made of 
a morphism $g:G\to G'$ in $\G$ 
and a morphism $a:A\to A'$ in $\A$ 
such that $\att' \circ S_pg \geq T_pa \circ \att$ (in $\Part$).
  $$ \xymatrix@C=3pc@R=1.5pc{
  \ag{G} \ar[d]_{\ag{g}} & \ar@{}[d]|{=} & 
  G \ar[d]_{g} & S_pG \ar[d]_{S_pg} \ar@{-^{>}}[r]^{\att} & 
    T_pA \ar[d]^{T_pa} & A \ar[d]^{a} \\ 
  \ag{G'} & & 
  G' & S_pG' \ar@{-^{>}}[r]^{\att'} & T_pA' \ar@{}[ul]|{\geq} & A' \\ 
  }$$  
Such a morphism of partially attributed structures is called 
\emph{strict} when $\att' \circ S_p(g) = T_p(a) \circ \att$. 
\end{definition}

\begin{remark} 
Clearly, $\AG$ is a full subcategory of $\PAG$ 
and every morphism in $\AG$ is a strict morphism in $\PAG$. 
The subcategory $\AG$ of $\PAG$ is called the 
subcategory of \emph{totally attributed structures}.
\end{remark} 

\begin{definition}
A morphism of (partially) attributed structure 
$\ag{g}:\ag{G}\to\ag{G'}$ \emph{preserves attributes} 
if $\ag{G}=(G,A,\att)$, $\ag{G'}=(G',A,\att')$ 
and $\ag{g}=(g,\id_A)$ for some object $A$ in $\A$.
\end{definition}

\textbf{Notations.} 
We will omit the subscript $p$ in $S_p$ and $T_p$.
%
Let $(G,A,\att)$ be a (partially) attributed structure, 
the notation $x:t$  
means that $x \in S(G)$, $t \in T(A)$ and $\att(x)=t$ 
(i.e., $x$ has $t$ as  attribute),
and the notation $x:\bot$
means that $x \in S(G)$, $x\not\in\D(\att)$ (i.e., $x$ has no attribute). 
Let $(G,A,\att)$ and $(G',A',\att') $ be 
attributed structures, let $g:G\to G'$ in $\G$ and $a:A\to A'$ in $\A$, 
then $(g,a):(G,A,\att)\to(G',A',\att') $ is a morphism of 
attributed structures if and only if for all $x \in S(G)$ and $t \in T(A)$ 
$x:t \implies g(x):a(t)$. 
Let $(G,A,\att)$ and $(G',A',\att') $ be 
partially attributed structures, 
let $g:G\to G'$ in $\G$ and $a:A\to A'$ in $\A$, 
then $(g,a):(G,A,\att)\to(G',A',\att') $ is a morphism of partially
attributed structures if and only if for all $x \in SG$ and $t \in TA$ 
$x\in\D(\att) \implies g(x)\in\D(\att') \mbox{ and then }
x:t \implies g(x):a(t)$, 
and $(g,a)$ is strict if and only if for all $x \in SG$ and $t \in TA$ 
$x\in\D(\att) \iff g(x)\in\D(\att')$, and then $x:t \implies g(x):a(t)$. 
The notation $x:\bot$ can be misleading: 
of course we can extend $a:TA\to TA'$ as $a:TA+\{\bot\}\to TA'+\{\bot\}$ 
by setting $a(\bot)=\bot$, 
but then it is \emph{false} that $x:t \implies g(x):a(t)$ for each 
$x \in SG$ and $t \in TA+\{\bot\}$.
In fact, for each morphism of partially attributed structures $(g,a)$ 
we have $g(x):\bot \implies x:\bot$,
and it is only when $g$ is strict that in addition $x:\bot \implies g(x):\bot$.

\begin{definition}
\label{defi:under}
The \emph{underlying structure} functor is the functor 
$U_{\G}:\PAG\to\G$ which maps an attributed structure $(G,A,\att)$ 
to the object $G$ and $(g,a)$ to the morphism $g$. 
The \emph{underlying attributes} functor is the functor 
$U_{\A}:\PAG\to\A$ which maps an attributed structure $(G,A,\att)$ 
to the object $A$ and $(g,a)$ to the morphism $a$. 
\end{definition}

\vspace{-0.5cm}
\section{Sesqui-Pushouts}
\label{sec:pre-sqpo}

In this section we briefly recall the definition of sesqui-pushout (SqPO) 
rewriting, introduced in \cite{CorradiniHHK06}.
A sesqui-pushout rewriting step is made of a final pullback complement (FPBC)
followed by a pushout (PO). 
The definitions of FPBC and SqPO are reminded here,
in any category $\C$. 
The initiality property of POs and the finality property of FPBCs 
imply that POs, FPBCs and SqPOs are unique up to isomorphism,
when they exist.

\begin{definition}
The \emph{final pullback complement} (FPBC) of 
a morphism $m_L:L\to G$ along a morphism $l:K\to L$ is a pullback (PB)
(below on the left) such that for each pullback (below on the right) 
\vspace{-0.3cm}
$$ \xymatrix@C=6pc{
L \ar[d]_{m_L} \ar@{}[rd]|{(PB)}  & 
  K \ar[l]_{l} \ar[d]^{m_K} \\
G & 
  D \ar[l]_{l_1} \\
} 
\qquad 
\xymatrix@C=6pc{
L \ar[d]_{m_L} \ar@{}[rd]|{(PB)}  & 
  K' \ar[l]_{l'} \ar[d]^{m'} \\
G & 
  D' \ar[l]_{l'_1} \\
} $$
and each morphism $f:K'\to K$ such that $l\circ f=l'$ 
there is a unique morphism $f_1:D'\to D$ such that 
$l_1\circ f_1=l'_1$ and $f_1\circ m'=m_K\circ f$. 
$$ \xymatrix@C=6pc{
L \ar[d]_{m_L} \ar@{}[rd]|{(FPBC)}   & 
  K \ar[l]^{l} \ar[d]^{m_K} & 
  K' \ar@/_3ex/[ll]_{l'} \ar[l]^{f} \ar[d]^{m'} \\ 
G & 
  D \ar[l]_{l_1} & 
  D' \ar@/^3ex/[ll]^{l'_1} \ar@{-->}[l]_{f_1} \\ 
} $$
\end{definition}

\begin{definition}
The \emph{sesqui-pushout} of 
a morphism $m_L:L\to G$ along a span of morphisms $(l:L\lto K \to R:r)$ 
is the FPBC of $m_L$ along $l$ followed by the PO of $m_K$ along $r$ 
(see diagram below).
$$   \xymatrix@C=6pc{ 
  \ar@{}[rd]|{(FPBC)}  
  L \ar[d]_{m_L} & K \ar@{}[rd]|{(PO)} \ar[l]_{l} \ar[r]^{r} \ar[d]^{m_K}  
  & R \ar[d]^{m_R}\\
  G & D \ar[l]_{l_1} \ar[r]^{r_1} 
  & H  \\
  }  $$
\end{definition}

A comparison of SqPO with DPO and SPO approaches can be found 
in \cite{CorradiniHHK06}, where it is stated that
``\emph{Probably the most original and interesting feature of sesqui-pushout rewriting is the fact that it can be applied to non-left-linear rules as well, 
and in this case it models the cloning of structures.}'' 

In the category of graphs, 
under the assumption that $m_L\colon L\to G$ is an inclusion, 
the result of the sesqui-pushout can be described as follows 
\cite[Section 4.1]{CorradiniHHK06}, \cite{DuvalEP12}.
With respect to a rule $(l:L\lto K \to R:r)$, let us 
call \emph{tri-node} a triple $(n_L,n_K,n_R)$ 
where $n_L$, $n_K$ and $n_R$ are nodes in $L$, $K$ and $R$ respectively
and where $n_L=l(n_K)$ and $n_R=r(n_K)$. 
Since $m_L$ is an inclusion, $L$ is a subgraph of $G$. 
Let $\ol{L}$ be the subgraph of $G$ made of all the nodes outside $L$
and all the vertices between these nodes. 
Let $\ti{L}$ be the set of edges outside $L$ 
with at least one endpoint in $L$ (called the \emph{linking edges}),
so that $G$ is the disjoint union of $L$, $\ol{L}$ and $\ti{L}$. 
Then, up to isomorphism, $m_R$ is an inclusion and 
$H$ is obtained from $G$ by replacing $L$ by $R$
and by ``gluing $R$ and $\ol{L}$ in $H$ according to 
the way $L$ and $\ol{L}$ are glued in $G$'',
which means precisely that $H$ is the disjoint union of $R$, $\ol{L}$ 
and the following set $\ti{R}$ of linking edges (see \cite{DuvalEP12} for more details):
\begin{itemize}
\item if $n$ is a node in $R$ and $p$ a node in $\ol{L}$,
there is an edge from $n$ to $p$ in $\ti{R}$ 
for each tri-node $(n_L,n_K,n_R)$ with $n_R=n$ 
and each edge from $n_L$ to $p$ in $\ti{L}$; 
\item if $n$ is a node in $\ol{L}$ and $p$ a node in $R$,
there is an edge from $n$ to $p$ in $\ti{R}$ 
for each tri-node $(p_L,p_K,p_R)$ with $p_R=p$ 
and each edge from $n$ to $p_L$ in $\ti{L}$; 
\item if $n$ and $p$ are nodes in $R$,
there is an edge from $n$ to $p$ in $\ti{R}$ 
for each tri-node $(n_L,n_K,n_R)$ with $n_R=n$, 
each tri-node $(p_L,p_K,p_R)$ with $p_R=p$ 
and each edge from $n_L$ to $p_L$ in $\ti{L}$. 
\end{itemize}

\vspace{-0.3cm}

\section{Attributed Sesqui-Pushout Rewriting}
\label{sec:results}

In this section 
we define rewriting of attributed structures based on sesqui-pushouts,
then we construct such SqPOs from SqPOs of the underlying (non-attributed) 
structures.

\begin{definition}
\label{def:rule}
Given an object $A$ of $\A$, a \emph{rewriting rule} 
with attributes in $A$ is a span 
$(\ag{l}:\rul{\ag{L}}{\ag{K}}{\ag{R}}:\ag{r})$,
or simply $(\ag{l},\ag{r})$, 
made of morphisms $\ag{l}$ and $\ag{r}$ in $\PAG$ which preserve attributes
and such that $\ag{L}$ and $\ag{R}$ are totally attributed structures. 
A \emph{match} for a rule $(\ag{l},\ag{r})$ in an attributed structure $\ag{G}$ 
is a morphism $\ag{m}=(m,a):\ag{L}\to\ag{G}$ in $\AG$
such that the map $Sm$ is injective. 
The \emph{SqPO rewriting step} (or simply the \emph{rewriting step})  
applying a rule $(\ag{l},\ag{r})$ 
to a match $\ag{m}$ is the sesqui-pushout of $\ag{m}$
along $(\ag{l},\ag{r})$ in the category $\PAG$. 
\end{definition}

From the definition above, a rewrite rule is characterised by (i) the object $A$ of attributes, (ii) the attributed structures $\ag{L}$, $\ag{K}$and $\ag{R}$ and (iii) the span of structures $(l:L\leftarrow K \rightarrow R:r)$. 
A match $\ag{m}$ must have an injective 
underlying morphism of structures but it may modify the attributes.
In contrast, the morphisms $\ag{l}=(l,\id_A)$ and $\ag{r}=(r,\id_A)$ 
in a rule have arbitrary underlying morphisms of structures $l$ and $r$,
thus allowing items to be added, deleted, merged or cloned,  
but they must preserve attributes
since their underlying morphism on attributes is the identity $\id_A$.
However, since $\ag{K}$ is only partially attributed, 
any element $x\in SK$ without attribute may be mapped 
to $l(x):a$ in $\ag{L}$ and to $r(x):a'$ in $\ag{R}$ with $a\ne a'$. 
Thus the assignment of attributes to vertices/edges may
change in the transformation process.

In the following when $(m,a)$ is a match we often assume that 
$Sm$ is an inclusion, rather than any injection; in this way 
the notations are simpler 
while the results are the same, since all constructions 
(PO, PB, FPBC) are up to isomorphism. 

The construction of a sesqui-pushout in $\PAG$ can be made in two steps:
first a sesqui-pushout in $\G$, 
which depends only on the properties of the category $\G$,
then its lifting to $\PAG$, which does not depend any more on $\G$. 
Moreover, this lifting is quite simple: 
since the morphisms $l$ and $r$ do not modify the attributes, 
it can be proved that $m_K$ and $m_R$ have the same 
underlying morphism on attributes as $m_L$.
This is stated in Theorem~\ref{theo:sqpo}. 

\vspace{-0.2cm}
\begin{theorem}
\label{theo:sqpo}
Let us assume that the functors 
$U_{\G}:\PAG\to\G$, $U_{\A}:\PAG\to\A$, $S:\G\to\Set$ and $T:\A\to\Set$ 
preserve PBs and that the functor $S$ preserves POs. 
Let $(\ag{l}:\rul{\ag{L}}{\ag{K}}{\ag{R}}:\ag{r})$ 
be a rewriting rule 
and $\ag{m_L}=(m_L,a):\ag{L}\to \ag{G}$ a match. 
If diagram $\Delta$ (below on the left) 
is a SqPO rewriting step in $\G$ 
then diagram $\ag{\Delta}$ (below on the right) 
is a SqPO rewriting step in $\PAG$ and $(m_R,a)$ is a match.
\end{theorem}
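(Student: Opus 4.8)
The plan is to build the sesqui-pushout in $\PAG$ in the two stages into which it decomposes — a final pullback complement, then a pushout — performing each stage "on top of" the corresponding stage already given in $\G$, and using the hypotheses that $U_{\G}$, $U_{\A}$, $S$, $T$ preserve pullbacks and $S$ preserves pushouts to control the underlying $\G$- and $\A$-components. The structural fact I rely on is this: a square in $\PAG$ is a pullback as soon as its $U_{\G}$- and $U_{\A}$-images are pullbacks \emph{and} the partial attribution on its apex is the largest one making both projections $\PAG$-morphisms; dually a pushout in $\PAG$ carries the smallest attribution on its apex making both coprojections $\PAG$-morphisms. (I would record this as a short lemma, or inline it.) Since $\ag{l}=(l,\id_A)$ and $\ag{r}=(r,\id_A)$ preserve attributes and $\ag{m_L}=(m_L,a)$, the candidate lifting has $\ag{m_K}=(m_K,a)$, $\ag{m_R}=(m_R,a)$, $\ag{l_1}=(l_1,\id_B)$, $\ag{r_1}=(r_1,\id_B)$, writing $B=U_{\A}\ag{G}$.

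For the FPBC stage I take $\ag{D}=(D,B,\att_D)$ with $\att_D$ equal to $\att_G\circ Sl_1$ everywhere except undefined exactly on the $Sm_K$-image of the attribute-free elements of $K$, that is, $\D(\att_D)=SD\setminus Sm_K(SK\setminus\D(\att_K))$; this is single-valued because $Sm_K$ is injective. A short computation from $l_1\circ m_K=m_L\circ l$, from $\ag{m_L}$ being a morphism in $\AG$, and from $\ag{l}$ preserving attributes shows this is consistent and makes $\ag{m_K}$ a strict $\PAG$-morphism and $\ag{l_1}$ a $\PAG$-morphism. Then I check (a) the lifted left square is a pullback in $\PAG$: its $\G$-part is the given FPBC (hence a pullback), its $\A$-part is a pullback along the identity, and $\att_D$ was chosen precisely so that $\att_K$ is the largest attribution compatible with the two projections out of $\ag{K}$; and (b) the finality of this pullback complement.

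I expect (b) to be the main obstacle. Given any competing pullback square $\ag{l'}\colon\ag{K'}\to\ag{L}$, $\ag{m'}\colon\ag{K'}\to\ag{D'}$, $\ag{l'_1}\colon\ag{D'}\to\ag{G}$ and any $\ag{f}\colon\ag{K'}\to\ag{K}$ with $\ag{l}\circ\ag{f}=\ag{l'}$, the finality of the FPBC in $\G$ (applicable since $U_{\G}$ turns the competing square into a pullback in $\G$) produces the unique underlying $f_1\colon D'\to D$, and its attribute component is forced to be that of $\ag{l'_1}$. What remains is that the resulting pair is a genuine $\PAG$-morphism, i.e. that $x\in\D(\att_{D'})$ implies $f_1(x)\notin Sm_K(SK\setminus\D(\att_K))$. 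For this one uses injectivity of $Sm_L$ (the match condition) and of $Sm_K$ — the latter because $Sm_K$ is a pullback of the monomorphism $Sm_L$ and $S$ preserves that pullback — to pull $x$ back to the unique $k'\in SK'$ with $m'(k')=x$ and to obtain $k\in SK$ with $Sm_K(k)=f_1(x)$ and $f(k')=k$; then maximality of the attribution on the pullback apex $\ag{K'}$ forces $k'\in\D(\att_{K'})$ (since $m'(k')$ is attributed in $\ag{D'}$), and since $\ag{f}$ preserves attributes, $k=f(k')\in\D(\att_K)$, contradicting $f_1(x)\in Sm_K(SK\setminus\D(\att_K))$. Agreement of attribute values then follows from $l_1\circ f_1=l'_1$ and $\ag{l'_1}$ being a $\PAG$-morphism, and uniqueness of the lift $\ag{f_1}$ follows from uniqueness in $\G$ and the forced $\A$-component.

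For the pushout stage, $S$ preserving pushouts makes $SH$ the set-pushout of $Sm_K$ and $Sr$; since $Sm_K$ is injective, $SR\to SH$ is injective and $SH$ is the disjoint union of the image of $SR$ and the part of $SD$ outside the image of $Sm_K$. I then take $\ag{H}=(H,B,\att_H)$ with $\att_H$ equal to $Ta\circ\att_R$ on the $R$-piece and to $\att_D$ (equivalently $\att_G\circ Sl_1$) elsewhere; well-definedness on elements having both an $R$- and a $D$-representative uses that $\ag{r}$ preserves attributes, $\ag{m_L}$ is a morphism in $\AG$, and $\att_R$ is total. This $\att_H$ is total, so $\ag{H}$ is a totally attributed structure, $\ag{m_R}$ is strict, and $Sm_R$ is injective (a pushout in $\Set$ along a monomorphism has injective opposite coprojection); hence $(m_R,a)$ is a match. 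That the right square is a pushout in $\PAG$ is the easier dual check: the underlying $\G$- and $\A$-squares are pushouts and $\att_H$ is minimal compatible, so any cocone out of $\ag{D}$ and $\ag{R}$ extends uniquely to a $\PAG$-morphism out of $\ag{H}$, verified separately on the $R$-piece (through $\ag{m_R}$) and elsewhere (through $\ag{r_1}$). Composing the FPBC and the PO just produced shows $\ag{\Delta}$ is a SqPO rewriting step in $\PAG$.
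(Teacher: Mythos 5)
Your proposal is correct and follows essentially the same route as the paper's: the same decomposition into an FPBC lifting followed by a PO lifting, the same attributions $\att_D$ (undefined exactly on the images of unattributed elements of $K$, i.e.\ $Ta\circ\att_K$ on $SK$ and $\att_G\circ Sl_1$ outside) and $\att_H$ ($Ta\circ\att_R$ on $SR$ and $\att_D$ outside), and the same crux in the finality check. Your appeal to ``maximality of the attribution on the pullback apex'' of the competing square is precisely the paper's Lemma~\ref{lemm:pb} (strictness of $(m',a')$ and $(Sl'_1)^{-1}(Sm_L(SL))\subseteq Sm'(SK')$), established there by the same extension argument your lemma would require, using that $T\circ U_{\A}$ preserves the pullback to produce the compatible attribute value.
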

\vspace{-0.3cm}
$$ \begin{array}{lll}
\Delta: & \quad & \ag{\Delta}: \\ 
\xymatrix@C=4pc{ \ar@{}[rd]|{(FPBC)}  
  L \ar[d]^(.3){m_L} & 
  K \ar@{}[rd]|{(PO)} \ar[l]_{l} \ar[r]^{r} \ar[d]^(.3){m_K}  
  & R \ar[d]^(.3){m_R}\\
  G & D \ar[l]_{l_1} \ar[r]^{r_1} 
  & H   \\
  } &&
\xymatrix@C=4pc{ \ar@{}[rd]|{(FPBC)}  
  \ag{L} \ar[d]^(.3){(m_L,a)} & 
    \ag{K} \ar@{}[rd]|{(PO)} 
    \ar[l]_{(l,\id_A)} \ar[r]^{(r,\id_A)} \ar[d]^(.3){(m_K,a)}  
  & \ag{R} \ar[d]^(.3){(m_R,a)} \\
  \ag{G} & 
    \ag{D} \ar[l]_{(l_1,\id_{A_1})} \ar[r]^{(r_1,\id_{A_1})} 
  & \ag{H} \\
 } \\ 
\end{array} $$

\begin{proof} 
Since a sesqui-pushout is a FPBC followed by a PO, 
this proof relies on similar 
results about the lifting of FPBCs and the lifting of POs 
in Appendix~\ref{app:po} and~\ref{app:fpbc}, respectively.
\end{proof} 

Let us summarize  what may occur 
for an element $x\in SD$.
If $x \not\in SK$ then only one case may occur: 
\vspace{-0.3cm}
$$
\xymatrix@R=1pc@C=1.5pc{
l_1(x):t_1 & x:t_1 \ar@{|->}[l] \ar@{|->}[r] & r_1(x):t_1 \\ 
}$$
If $x\in SK$ then two cases may occur:
\vspace{-0.3cm}
$$
\xymatrix@R=1pc@C=1.5pc{
l(x)\colon\! t \ar@{|->}[d] & 
  x\colon\! t \ar@{|->}[l] \ar@{|->}[r] \ar@{|->}[d] & 
  r(x)\colon\! t \ar@{|->}[d] \\ 
l_1 (x)\colon\! a(t) & x\colon\! a(t) \ar@{|->}[l] \ar@{|->}[r] & r_1(x)\colon\! a(t) \\ 
}
\quad\;\;
\xymatrix@R=1pc@C=1.5pc{
l()x\colon\! t \ar@{|->}[d] & 
  x\colon\! \bot \ar@{|->}[l] \ar@{|->}[r] \ar@{|->}[d] &  
  r(x)\colon\! t' \ar@{|->}[d] \\ 
l_1(x)\colon\! a(t) & x\colon\! \bot \ar@{|->}[l] \ar@{|->}[r] & r_1(x)\colon\! a(t') \\ 
}
$$

\section{Graph Transformations with Simply Typed 
              $\lambda$-terms as Attributes} 
\label{sec:lambda}

\newcommand{\variables}{X}
\newcommand{\btype}{\iota}
\newcommand{\naturalchurch}{{\mathcal N}}

In this section we consider simply typed $\lambda$-terms as
attributes. The choice of the $\lambda$-calculus can be motivated by
the possibility to perform higher-order computations (functions can be
passed as parameters). 
 We refer  to \cite{Bar13} for more details concerning the simply-typed
$\lambda$-calculus, though basic notions of
$\lambda$-calculus are enough to understand the example provided
in this section.

First, let us  choose the categories $\G$ and
$\A$ and the functors $S$ and $T$.
Let $\G=\Gr$ be the category of graphs.
Let $S:\G\to\Set$ be the functor which maps each graph to the disjoint
union of its set of vertices and its set of edges.  We define the
category $\A$ as the category where objects are sets $\Lambda(\variables)$
of simply typed $\lambda$-terms, \`a la
Church, built over  variables in $\variables$.  For the sake of simplicity we only consider
one base type $\btype$.  Simply typed $\lambda$-terms in
$\Lambda(\variables)$, noted $t$, and types, noted $\tau$, are defined
inductively by: $\tau \colon\!\!\colon\!\!=   \btype \mid \tau \to \tau$ and 
$t \colon \!\!\colon\!\! =  x \mid (t\ t) \mid \lambda x^{\tau}.t$ with $x \in \variables$.
%
%
A morphism $m$ from $\Lambda(\variables)$ to $\Lambda(\variables')$ is
totally defined by a substitution from $\variables$ to
$\Lambda(\variables')$.
The functor $T:\A \to \Set$ is such that $T(\Lambda(\variables))$ is
the set of normal forms of elements in $\Lambda(\variables)$. Other choices for
$T$ are possible, for instance $T(\Lambda(\variables))$ could be chosen
to be $\Lambda(\variables)$ itself. However in this case there would
be no reduction in the attributes while rewriting.
With the definitions as above, the functors $U_{\G}:\PAG\to\Gr$,
$U_{\A}:\PAG\to\A$, $S:\Gr\to\Set$ and $T:\A\to\Set$ preserve
pullbacks, and $S$ preserves pushouts.

Graph transformations can be coupled with $\lambda$-term evaluation.
For instance, a vertex, $n$, of a right-hand side, $R$, of a rule may
be attributed with a lambda-term, $t$, containing free variables which
occur in the left-hand side $L$.  A match, $\sigma$ of such a rule
instantiates the free variables. Firing the rule will result in (i) the
computation of the normal form of the lambda-term $\sigma(t)$ and
(ii) its attribution to the image of vertex $n$ in the resulting transformed
graph.  Below we give an example of such a rule and illustrate it on
the graph $\lambda G$.

\hspace{-1cm}
\begin{tabular}{|c|c|c|}
\hline 
$L$ & $K$ & $R$ \\
$\xymatrix@C=1pc@R=1pc{
   \ovalbox{n:$x$} \ar[d] 
   & \ovalbox{m:$y$} \\
   \ovalbox{p:$f$} }$&
$\xymatrix@C=1pc@R=1pc{\ovalbox{n:$\bot$} & \ovalbox{m:$y$}  \\
                       \ovalbox{p:$f$} & \ovalbox{m':$\bot$}}$
&$\xymatrix@C=1pc@R=1pc{
             \ovalbox{n:$(f \ x \ y)$}
             & \ovalbox{m:$y$}  \\
             \ovalbox{p:$f$} & \ovalbox{m':$f$}}$ \\
\hline
 $\lambda G$  & $\lambda D$ & $\lambda H$ \\
$\xymatrix@C=1pc@R=1pc{
   \ovalbox{n:$w$ \ar[d]} 
   & \ovalbox{m:$\lambda u^{\iota}.u$} \ar[l]\\
   \ovalbox{p:$\lambda s^{\iota} . \lambda t^{\iota}.s$} \ar[ur]}$&
$\xymatrix@C=1pc@R=1pc{\ovalbox{n:$\bot$} & 
                       \ovalbox{m:$\lambda u^{\iota}.u$} \ar[l]  \\
                       \ovalbox{p:$\lambda s^{\iota} .
                                 \lambda t^{\iota}.s$} \ar[ur] \ar[r]& 
                       \ovalbox{m':$\bot$} \ar[ul]}$
&$\xymatrix@C=1pc@R=1pc{
             \ovalbox{n:$w$}
             & \ovalbox{m:$ \lambda u^{\iota}.u$} \ar[l] \\
             \ovalbox{p:$\lambda s^{\iota} 
                                 .\lambda t^{\iota}.s$}\ar[ur] \ar[r] & 
             \ovalbox{m':$\lambda s^{\iota} 
                                 .\lambda t^{\iota}.s$\ar[ul]} & 
             }$ \\
\hline
\end{tabular}

Graph morphisms are represented via vertex name sharing, and $U_{\A}$
can be deduced from them (for instance attribute $x$ in $L$ is
instantiated by attribute $w$ in $\lambda G$ because of the match on vertex $n$,
likewise $f$ is instantiated by $\lambda s^{\iota}. \lambda
t^{\iota}.s$ and $y$ is instantiated by $\lambda u^{\iota}.u$).  In
this example several features of our framework are underlined. First,
notice that vertex $m$ in $L$ is cloned, as a structure, into $m$ and
$m'$.  This cloning of structure implies that the edges incident to
$m$ in $\lambda G$ are to be duplicated for $m$ and $m'$ in $\lambda H$.  As
for attributes, the example shows that the structure can be cloned while
the attributes can be changed (this is the case for the attribute of
vertex $m'$).
The edge between
vertices $n$ and $p$ is erased since it is matched and is not present in
$K$ nor in $R$. Furthermore, the attribute of $n$ in $R$ shows a higher-order
computation. Via the match, $f$ is substituted by the function $\lambda
s^{\iota}. \lambda t^{\iota}.s$ and is applied to the instances of $x$ and $y$. 
In $\lambda H$ the attribute of $n$ is the normal form of $(\lambda
s^{\iota}.\lambda t^{\iota}.s \ w \ \lambda u^{\iota}.u)$ which is
$w$. 
Attributes can be easily copied, e.g., $f$ occurs twice
in $R$. Finally, attributes of a vertex can be modified thanks to the
partiality of the attribution in $K$. It is witnessed on vertices $n$ and
even $m'$ which is a clone of $m$. In fact $m'$ clones only the
incident edges of $m$, one would have to write $m'\colon y$ to copy
the attribute of $m$ as well.
Free variables are used to provide arguments of lambda-terms. This allows us to
simulate the attribute dependency relation introduced in
\cite{BoiFerSol11}.

%

\vspace{-0.3cm}
\section{Graph Transformations with Attributes 
Defined Equationally: Administration of Cloud Infrastructure}
\label{sec:cloud}

\newcommand{\sort}[1]{{\sf #1}}
\newcommand{\opn}[1]{{\it #1}}
\newcommand{\rulename}[1]{\textit {\textbf{#1}}}
In this section we explore how our framework allows us to take into
account attributed graph transformations with attributes built 
over equational specifications. First we instantiate the definition with appropriate
categories and functors, and then model an example.

%
Let the category $\G$ and functor $S:\G\to\Set$ be defined as in section
\ref{sec:lambda}.
Let $T:\A\to\Set$ be the functor which maps each model of $\Sp=(\Sigma,E)$, with  $\Sigma=(S,\Omega)$,
to the disjoint union of the carriers sets $A_s$ for $s$ in 
some given set of sorts $S$.
%
With the definitions as above, 
the functors $U_{\G}:\PAG\to\Gr$, $U_{\A}:\PAG\to\Mod(\Sp)$, 
$S:\Gr\to\Set$ and $T:\Mod(\Sp)\to\Set$ preserve pullbacks,  
and $S$ preserves pushouts.


\newcommand{\naturalt}{{\sf N}}
\newcommand{\booleant}{{\sf B}}
\newcommand{\succc}{{\sf Succ}}
Cloud Computing is very popular nowadays \cite{cloudcomputing}. The general idea is that there is a pool, called cloud, of resources (equipment, services, etc.) that may be requested by users. A user may, for example,  request a machine with some specific configuration and services from the cloud. The cloud administrator chooses an actual physical machine that is available and installs on it  a virtual machine (short VM) according to the user specification. The user does not have to know neither where this machine is nor how the services are implemented, communication with his machine is done via the cloud. The cloud administrator has many tasks to perform, besides communicating with the clients (users). Typical operations involve load balance among the machines, optimisation of the use of machines, etc. In the following we provide the specification using graph transformations of some operations of a cloud administrator. First we define the static structure,  defining data types and the states of the system (as attributed graphs), and then we define the operations (as rules).  Since the purpose of this case study is to show  the use of our framework, we will not describe a complete set of attributes and rules needed to specify the behaviour of a cloud administrator, but concentrate on those parts that make explicit use of the features of the approach.

\subsection{Cloud Administration: Static Part}
To model this scenario, we will use  graphs with many attributes. The approach presented in the previous sections could be easily extended to families of attributes. Alternatively, one  could use just one record attribute, but we prefer the former representation since the specification becomes more readable. The attributes that will be used are: 

\begin{description}
\item[Vertex attributes:] 
       \sort{nodeType},  represent the different entities involved in this system, that is, cloud administrator, users, machines and virtual machines. In the graphical notation, this attribute will be denoted by a corresponding image (\includegraphics[width=0.03\textwidth]{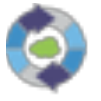},\includegraphics[width=0.03\textwidth]{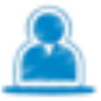},\includegraphics[width=0.03\textwidth]{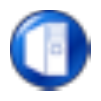} and \includegraphics[width=0.03\textwidth]{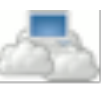}, resp.);
	\sort{ident},  models the identifier of the vertex;
	\sort{size}, denotes the size of the machine and virtual machine;
	\sort{free}, describes the amount of unused space in a machine;
	\sort{type}, describes the type of a virtual machine (as a simplification, we assumed that there is a set of standard virtual machines that may be requested by users, identified by their types);
	\sort{config}, this models the internal configuration of the cloud administrator, probably this would be a set of tables and variables describing the current state of machines and virtual machines;
\item[Edge attributes:]
	\sort{edgeType}, some arcs will represent physical relations (like a cloud administrator is connected to all machines monitored by it) or "knows"-relations (like a user may know a cloud administrator) and others will represent messages that are sent in the system. Messages will be denoted by dashed arrows, all other relations will be solid edges; 
	\sort{type}, analogous to the types of vertices;
	\sort{id}, used in messages that require a  parameter (identifier of a virtual machine). 
\end{description}	

The data types used in the state graph are defined in specification $Cloud\_Sp$ (Figure \ref{fig:cloudsp}). This specification includes sorts for booleans and natural numbers with usual operations and equations,  sort \sort{T} for the different types of virtual machines, and a sort \sort{C} to describe configurations of a cloud administrator. Such configurations are records containing the current status of the cloud. Due to space limitations, we will not define details of configurations, just use some basic operations (equations will be also omitted).

\begin{figure}[htp]
\centering
\framebox 
{
\begin{minipage}{\textwidth}
\begin{tabbing}
\it \small
{Cloud\_Sp} : \\
\= {\bf sorts} \= \sort{B}, \sort{N}, \sort{C},\sort{T} \\
\> {\bf opns} \>   \  \hspace{5cm}                                                                     \= \\
\>                 \>  $\ldots$ \> \emph{boolean operators...}\\
\>                 \>  $\ldots$ \> \emph{natural numbers operators...}\\
\>                  \> \opn{newId:} \sort{C} $\times$ \sort{N} $\to$ \sort{B} 
 			\> \emph{checks whether an id is not used in a config}\\
\>                  \> \opn{enoughSpace: } \sort{C} $\times$ \sort{N} $\to$ \sort{B}
			\> \emph{checks if there is enough space in a config}\\
\>                  \>  \opn{newVM:} \sort{C} $\times$ \sort{N} $\times$ \sort{N} $\times$ \sort{N} $\times$ \sort{T} $\to$ \sort{C}
			\> \emph{includes a new virtual machine in a config}\\
\>                  \> 	 \opn{replVM: }  \sort{C} $\times$ \sort{N} $\times$ \sort{N} $\to$ \sort{C}  
			\> \emph{replicates a virtual machine in a config}\\
\>                  \> 	 \opn{newMch:} \sort{C} $\times$ \sort{N} $\times$ \sort{N}  $\times$\sort{Nat} $\to$ \sort{C}
			\> \emph{ includes a new machine in a config}\\
\> 		\> 	\opn{mergeMch:} \sort{C} $\times$ \sort{N} $\times$ \sort{N} $\to$ \sort{C} 
			\> \emph{merges two machines in a config}	\\		
\> 		\> 	\opn{replicateAdm?:} \sort{C} $\to$ \sort{B} 
			\> \emph{checks whether a new administrator is needed}	\\									
\> {\bf eqns} \\
\>                 \>  $\ldots$
\end{tabbing}
\end{minipage}
}
\caption{Specification $Cloud\_Sp$}
\label{fig:cloudsp}
\end{figure}

For example, the graph $\mathbf{G1}$ depicted in Fig.~\ref{fig:rules} describes two users and one cloud administrator that knows one machine, $M1$, and two types of virtual machines, $T1$ and $T2$. Actually, the administrator stores the images of the corresponding virtual machines such that, when a request is done, it creates a copy of this image in an available machine. Images are modelled by a special identifier ({\tt 0}). There are also two request messages, one from each user.

\begin{figure}[htbp]
\begin{center}
\includegraphics[width=1\textwidth]{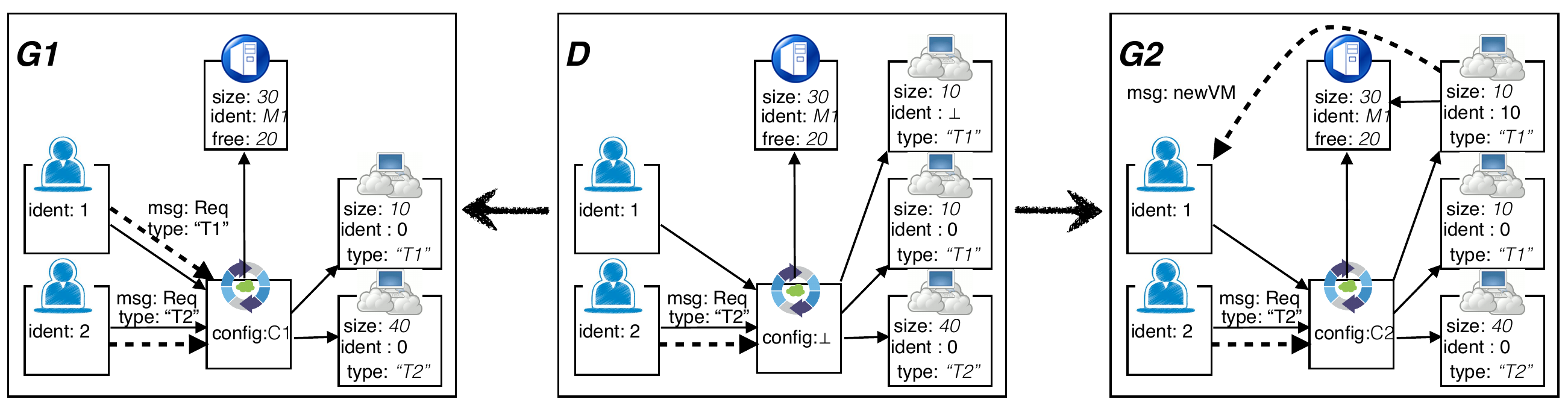}
\includegraphics[width=1\textwidth]{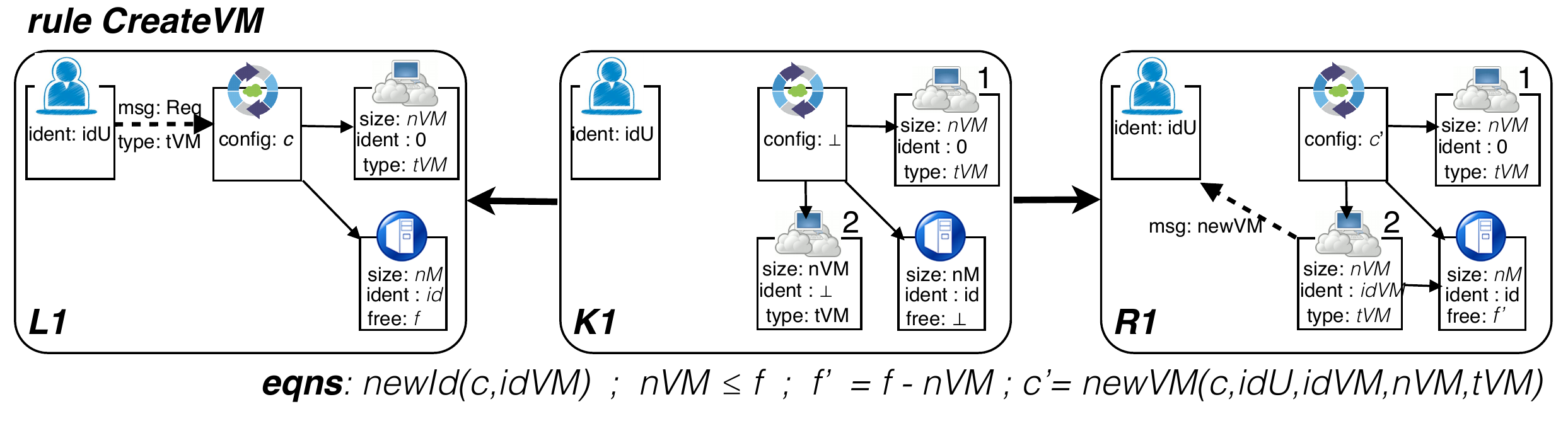}
\includegraphics[width=1\textwidth]{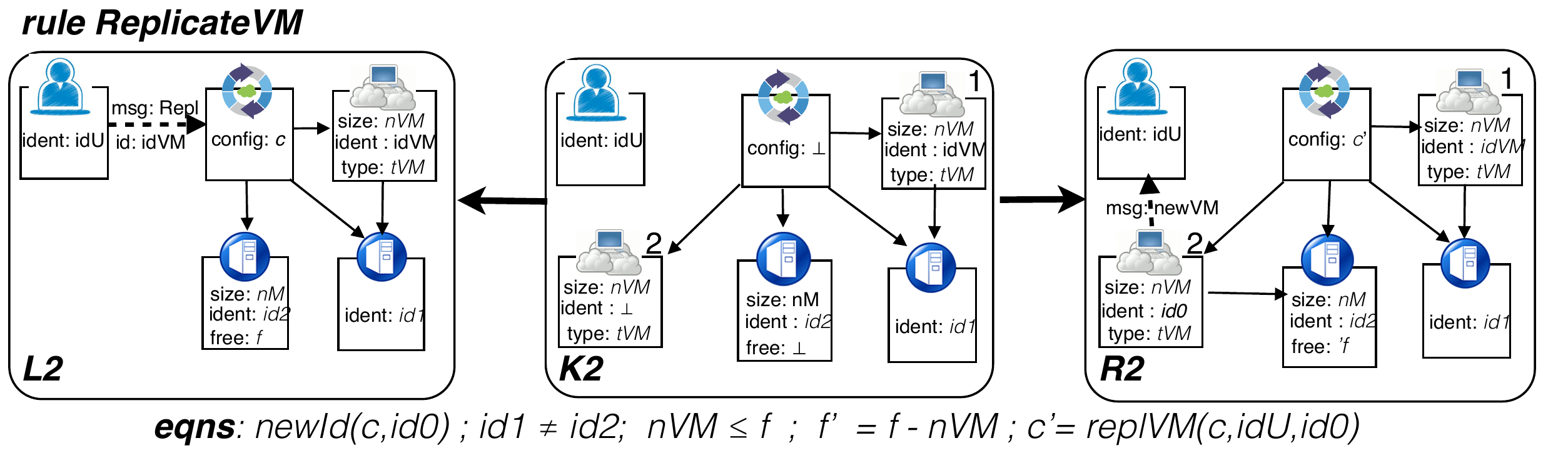}
\includegraphics[width=1\textwidth]{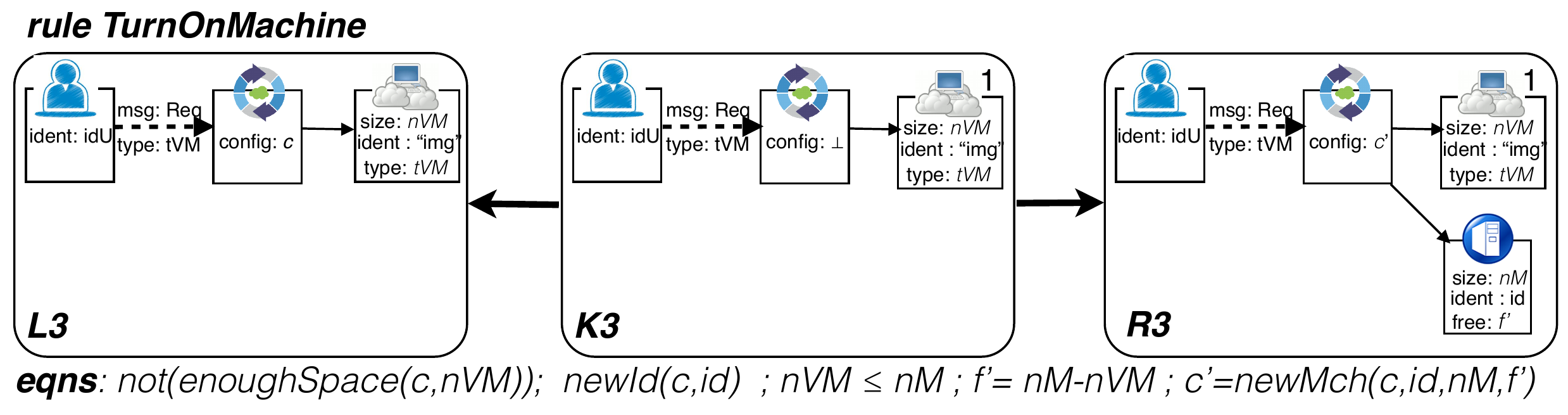}
\includegraphics[width=1\textwidth]{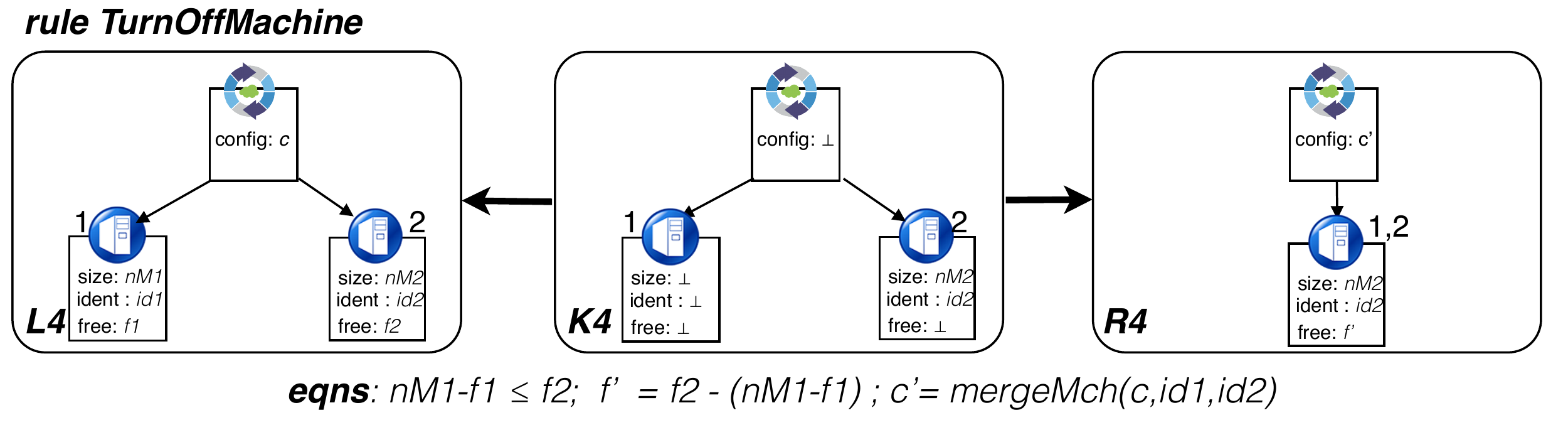}
\includegraphics[width=1\textwidth]{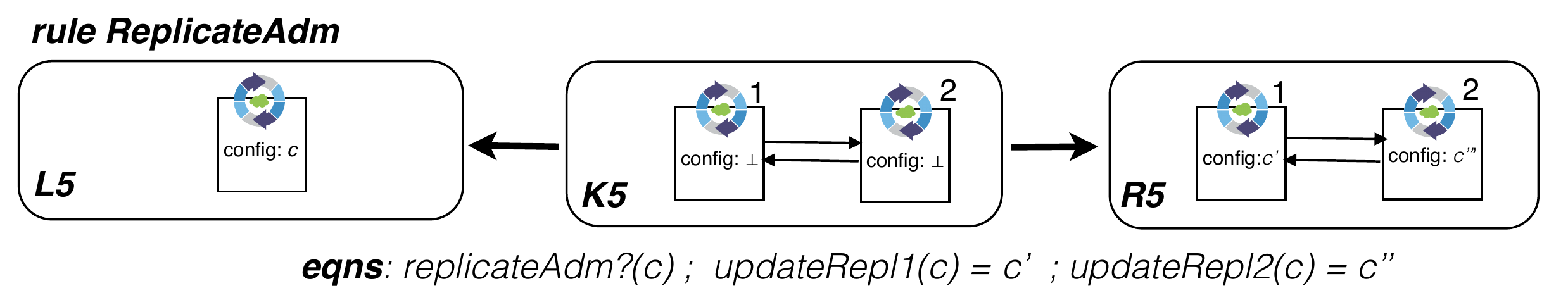}
\end{center}
\caption{Graph and Rules of the Cloud Administrator}
\label{fig:rules}
\end{figure}

\subsection{Cloud Administration: Dynamic Part}

 Figure  \ref{fig:rules} also shows some rules that describe the behaviour of the cloud administrator.  Rule \rulename{CreateVM} models the creation of a new virtual machine. This may happen when there is a request from a user (dashed edge in  $\mathbf{ L1}$) having as attribute the type of virtual machine that is created and the cloud administrator has a corresponding image and a machine to install this VM. Some additional constraints  over the attributes are modelled by equations (written below the rule): the identifier that will be used for the new VM is fresh (\opn{newId}$(c,idVM)$), there is enough free space in the chosen machine ($nVM\leq f$)\footnote{To enhance readability, when working with boolean expressions in equations, we omit the right side of the equation. For example, we write simply $newId(c,Id)$ instead of $newId(c,Id)=\sort{true}$.}. The remaining equations describe the values that some attributes will receive when this rule is applied: variable $f'$ depicts the amount of free space in the machine after the installation of the new VM, and $c'$ is the updated configuration of the could administrator. Note that the two instances of the VM in $\mathbf {K1}$ are copies of the corresponding vertex in $\mathbf {L1}$, just the identifier attribute in the second copy is left undefined, the attributes \opn{config} and \opn{free} are also undefined, since their values will change. Finally, in $\mathbf {R1}$, this second copy is updated with the new identifier ($idVM$) and it is installed in the machine and sent to the user, and the attributes of the cloud administrator and machine are updated accordingly. Application of this rule to graph $\mathbf{G1}$ is given by the span $G1 \leftarrow D \rightarrow G2$ on top of Fig.~\ref{fig:rules}.
 
 Rule \rulename{replicateVM} creates a copy (replica) of a VM in another physical machine. This operation is important for fault tolerance reasons. When this rule is applied, all references to the original VM will also point to the new VM. The configuration of the cloud administrator is updated because any change in one virtual machine must now be propagated to its copy. Rule \rulename{replicateAdm}  is used to replicate the cloud administrator itself. This kind of operation may be necessary, for example,  when the number of clients becomes too large or for dependability reasons. The rule that specifies the operation has an equation that checks whether this replication is needed ($replicateAdm?(c)$). In case this is true in the current configuration, the administrator is copied and the two configurations (the original and the copy) are updated (because now they must know that some synchronisation is needed to perform the operations). Since these are copies, they manage the same machines and VMs, but now clients may send requests to either of the administrators  (when this rule is applied, all edges that were connected to one administrator will also be connected to the copy).
 
 Rules \rulename{TurnOnMachine} and \rulename{TurnOffMachine} model the  creation and deletion of machines in the system. We assumed that there is an unlimited number of machines that may be connected to the system, and thus there is a need for more capacity ($not(enoughSpace(c,nVM)$ is true), a new machine may be added. We specified a simple version of turning off  a machine  by merging the vertices that correspond to two different machines. This can be done if the administrator notices that there is enough free space in one machine to accommodate VMs that are in another machines ($nM1-f1\leq f2$ is true).  When this rule is applied, all VMs that were in both machines will end up in the machine with identifier $id2$.

\vspace{-0.3cm}
\section{Related Work}
\label{sec:relatedwork}

 Various definitions of attributed graphs have been proposed in the
 literature. Labelled graphs, e.g. \cite{HabelP02}, in which attributes
 are limited to a simple set of a vocabulary, could be considered as a
 first step towards attributed graphs.  Such a set of vocabulary can
 be replaced by a specific, possibly infinite, set (of attributes) such
 as integers yielding particular definition of attributed graphs.
 This approach has been proposed for instance in \cite{PlumpS04} and
 could be considered as a particular case of the definition of
 attributed graphs we proposed in this paper.

 The most popular way to define the data part in attributed graphs is
 based on algebraic specifications, see
 e.g. \cite{LKW93,HeckelKT02,Berthold2002,EhrigEPT06}.  E-Graphs
 \cite{EhrigEPT06} is one of the principal contribution in this perspective, where an
 attributed graph gathers, in addition to its own vertices and edges,
 additional vertices and edges corresponding to the attribution part. The
 latter vertices correspond to possible attribution values. Such vertices
 might be infinite whenever the set of attributes is infinite.  An
 attribution edge goes from a vertex or an edge of the considered graph
 to an attribution vertex.  Attribution edges are used to represent
 graphically attribution functions.  Due to the representation of each
 attribute as a vertex, an E-graph is infinite in general.

To overcome the infinite structures of E-graphs, Symbolic graphs
\cite{OrejasL10} have been proposed. They are E-graphs which have
variables as attributes. Such variables can be constrained by means of
first order logic formulae. Hence a symbolic graph represents in
concise way a (possibly infinite) set of (ground) E-graphs. 

In this paper, we have proposed a general definition of attributed
structures where the data part is not necessarily specified as an
algebra. Our approach is very close to the recent paper by
U. Golas~\cite{Golas12} where an attributed graph is also defined as a
tuple $(G, A, \att)$ where $G$ is a given structure, $A$ consists of attribution values
and $\att$ is a \emph{family} of partial attribution functions. The
main difference with our proposal lies in the consideration of
attribution functions $\att$.  For sake of simplicity, we considered
simply partial functions for $\att$. Generalization to families of
functions as in \cite{Golas12} is straightforward.

Besides the variety of definitions of attributed graphs as mentioned
above, attributed graph transformation rules have been based mainly on
the double pushout approach which departs from the sesquipushout
approach we have used in our framework. For a comparison of the double
and the sesquipushout approaches we refer the reader to
\cite{CorradiniHHK06}. As far as we are aware of, the present paper
presents the first study of attributed graph transformations following
the sesquipushout approach and thus featuring the possibility of vertex
and edge cloning in presence of attributes. Thanks to partial
morphisms, rules allow also deletion and change of attributes.

\section{Conclusion} 
\label{sec:conclusion}

In this paper we presented an approach to transformations of
attributed structures that allows cloning and merging of items.  This
approach is based on the SqPO approach to graph
transformations, and thus also allows deletion in unknown
context. Concerning the attributes, our framework is general in the
sense that many different kinds of attributes can be used (not just
algebras, as in most attributed graph transformation definitions) and
allows that rules change the attributes associated to
vertices/edges. The resulting formalism is very interesting and we
believe that it can be used to provide suitable specifications of many
classes of applications like cloud computing, adaptive systems, and
other highly dynamically changing systems.

As future work, we plan to develop more case studies to understand the
strengths and weaknesses of this formalism for practical
applications. We also want to study analysis methods. Since we are
allowing non-injective rules, great part of the theory of graph
transformations can not be used directly and we need to investigate
which results may hold. Concerning verification of properties, we
intent to extend the analysis of graph transformations using theorem
provers \cite{RibeiroCosta} to attributed SqPO-rewriting.

\bibliographystyle{abbrv} \bibliography{main}

\begin{thebibliography}{10}

\bibitem{cloudcomputing}
M.~Armbrust, A.~Fox, R.~Griffith, A.~D. Joseph, R.~Katz, A.~Konwinski, G.~Lee,
  D.~Patterson, A.~Rabkin, I.~Stoica, and M.~Zaharia.
\newblock A view of cloud computing.
\newblock {\em Commun. ACM}, 53(4):50--58, Apr. 2010.

\bibitem{Bar13}
H.~Barendregt, W.~Dekers, and R.~Statman.
\newblock {\em Lambda Calculus with Types}.
\newblock Cambridge University Press, 2013.

\bibitem{BVG87}
H.~Barendregt, M.~van Eekelen, J.~Glauert, R.~Kenneway, M.~J. Plasmeijer, and
  M.~Sleep.
\newblock Term graph rewriting.
\newblock In {\em PARLE'87}, pages 141--158. LNCS 259, 1987.

\bibitem{Berthold2002}
M.~R. Berthold, I.~Fischer, and M.~Koch.
\newblock Attributed graph transformation with partial attribution, 2002.

\bibitem{BoiFerSol11}
B.~Boisvert, L.~F{\'e}raud, and S.~Soloviev.
\newblock Typed lambda-terms in categorical attributed graph transformation.
\newblock In {\em Proceedings Second International Workshop on Algebraic
  Methods in Model-based Software Engineering, AMMSE'11}, volume~56 of {\em
  EPTCS}, pages 33--47, 2011.

\bibitem{CorradiniHHK06}
A.~Corradini, T.~Heindel, F.~Hermann, and B.~K{\"o}nig.
\newblock Sesqui-pushout rewriting.
\newblock In {\em Third International Conference on Graph Transformations (ICGT
  06)}, volume 4178 of {\em Lecture Notes in Computer Science}, pages 30--45.
  Springer, 2006.

\bibitem{CorradiniMREHL97}
A.~Corradini, U.~Montanari, F.~Rossi, H.~Ehrig, R.~Heckel, and M.~L{\"o}we.
\newblock Algebraic approaches to graph transformation - part {I}: Basic
  concepts and double pushout approach.
\newblock In {\em Handbook of Graph Grammars}, pages 163--246, 1997.

\bibitem{RibeiroCosta}
S.~A. da~Costa and L.~Ribeiro.
\newblock Verification of graph grammars using a logical approach.
\newblock {\em Sci. Comput. Program.}, 77(4):480--504, 2012.

\bibitem{DuvalEP12}
D.~Duval, R.~Echahed, and F.~Prost.
\newblock Graph transformation with focus on incident edges.
\newblock In {\em 6th International Conference on Graph Transformations, ICGT
  2012}, volume 7562 of {\em Lecture Notes in Computer Science}, pages
  156--171. Springer, 2012.

\bibitem{EhrigEPT06}
H.~Ehrig, K.~Ehrig, U.~Prange, and G.~Taentzer.
\newblock Fundamental theory for typed attributed graphs and graph
  transformation based on adhesive hlr categories.
\newblock {\em Fundam. Inform.}, 74(1):31--61, 2006.

\bibitem{handbook2}
H.~Ehrig, G.~Engels, H.-J. Kreowski, and G.~Rozenberg, editors.
\newblock {\em Handbook of Graph Grammars and Computing by Graph
  Transformations, Volume 2: Applications, Languages and Tools}. World
  Scientific, 1999.

\bibitem{EhrigHKLRWC97}
H.~Ehrig, R.~Heckel, M.~Korff, M.~L{\"o}we, L.~Ribeiro, A.~Wagner, and
  A.~Corradini.
\newblock Algebraic approaches to graph transformation - part ii: Single
  pushout approach and comparison with double pushout approach.
\newblock In {\em Handbook of Graph Grammars}, pages 247--312, 1997.

\bibitem{handbook3}
H.~Ehrig, H.-J. Kreowski, U.~Montanari, and G.~Rozenberg, editors.
\newblock {\em Handbook of Graph Grammars and Computing by Graph
  Transformations, Volume 3: Concurrency, Parallelism and Distribution}. World
  Scientific, 1999.

\bibitem{EhrigPS73}
H.~Ehrig, M.~Pfender, and H.~J. Schneider.
\newblock Graph-grammars: An algebraic approach.
\newblock In {\em 14th Annual Symposium on Foundations of Computer Science
  (FOCS), 15-17 October 1973, The University of Iowa, USA}, pages 167--180.
  IEEE, 1973.

\bibitem{Golas12}
U.~Golas.
\newblock A general attribution concept for models in m-adhesive transformation
  systems.
\newblock In {\em 6th International Conference on Graph Transformations, ICGT
  2012}, volume 7562 of {\em Lecture Notes in Computer Science}, pages
  187--202. Springer, 2012.

\bibitem{HabelP02}
A.~Habel and D.~Plump.
\newblock Relabelling in graph transformation.
\newblock In {\em First International Conference on Graph Transformations, ICGT
  2002}, volume 2505 of {\em Lecture Notes in Computer Science}, pages
  135--147. Springer, 2002.

\bibitem{HeckelKT02}
R.~Heckel, J.~M. K{\"u}ster, and G.~Taentzer.
\newblock Confluence of typed attributed graph transformation systems.
\newblock In {\em First International Conference on Graph Transformations, ICGT
  2002}, volume 2505 of {\em Lecture Notes in Computer Science}, pages
  161--176. Springer, 2002.

\bibitem{LKW93}
M.~L{\"o}we, M.~Korff, and A.~Wagner.
\newblock An algebraic framework for the transformation of attributed graphs.
\newblock In R.~Sleep, M.~Plasmeijer, and M.~van Eekelen, editors, {\em Term
  Graph Rewriting: Theory and Practice}, chapter~14, pages 185--199. John Wiley
  \& Sons Ltd, 1993.

\bibitem{OrejasL10}
F.~Orejas and L.~Lambers.
\newblock Symbolic attributed graphs for attributed graph transformation.
\newblock {\em ECEASST}, 30, 2010.

\bibitem{PlumpS04}
D.~Plump and S.~Steinert.
\newblock Towards graph programs for graph algorithms.
\newblock In {\em ICGT, LNCS 3256}, pages 128--143, 2004.

\bibitem{handbook1}
G.~Rozenberg, editor.
\newblock {\em Handbook of Graph Grammars and Computing by Graph
  Transformations, Volume 1: Foundations}. World Scientific, 1997.

\bibitem{ST2012}
D.~Sannella and A.~Tarlecki.
\newblock {\em Foundations of Algebraic Specification and Formal Software
  Development.}
\newblock EATCS Monographs on theoretical computer science. Springer, 2012.

\end{thebibliography}

\appendix
\section*{Appendix}


Theorem~\ref{theo:sqpo} about the lifting of sesqui-pushouts 
from the category of structures $\G$ to the 
category of attributed structures $\PAG$ relies on 
two Propositions which are stated and proved in this Appendix: 
Proposition~\ref{prop:po} about the lifting of pushouts 
in Section~\ref{app:po} 
and Proposition~\ref{prop:fpbc} about the lifting of final pullback complements 
in Section~\ref{app:fpbc}

\section{Lifting Pushouts to attributed structures}
\label{app:po}

\begin{remark}
\label{remark:po-sets}
In the category of sets, 
let us consider a commutative square as follows: 
$$ \xymatrix@C=6pc{ 
\ar@{}[rd]|{(=)}
X \ar[r]^{f} \ar[d]_{i} & X' \ar[d]^{i'} \\
Y \ar[r]^{f_1} & Y' \\
} $$ 
Let us assume that $i$ is an inclusion, so that $Y=X+C$ 
where $C$ is the complement of $X$ in $Y$. 
This square is a PO (up to isomorphism) if and only if
$i'$ is an inclusion, so that $Y'=X'+C'$ 
where $C'$ is the complement of $X'$ in $Y'$, 
and moreover $C'=C$ and $f_1=f+\id_C$:   
$$ \xymatrix@C=6pc{ \ar@{}[rd]|{(PO)}
X \ar[r]^{f} \ar[d]_{i} & X' \ar[d]^{i'} \\
Y=X+C \ar[r]^{f_1=f+\id_C} & Y'=X'+C \\
} $$ 
It follows that injections are stable under POs in $\Set$. 
\end{remark} 

\begin{proposition}
\label{prop:po}
Let us assume that the functor $S:\G\to\Set$ preserves POs.
Then the functor $U_\G:\PAG\to\G$ lifts POs 
of matchings along attribute-preserving morphisms, 
in the following precise way.
Let $(r,\id_A):\ag{K}\to\ag{R}$ be a morphism in $\PAG$.
Let $(m_K,a):\ag{K}\to\ag{D}$ be a matching,
so that we can denote $SD=SK+C$ with $Sm_K$ the canonical inclusion.
Let $\Delta_{\po}$ (below on the left) be a PO in $\G$, 
then 
we can denote $SH=SR+C$ with $Sm_R$ the canonical inclusion
and $Sr_1=Sr+\id_C$.
Let $\att_H:SH\pto JA_1$ be defined as $Ja\circ \att_R$ on $SR$ 
and as $\att_D$ on $C$. 
Then $(m_R,a)$ is a matching, 
$(r_1,\id_{A_1})$ is a morphism in $\PAG$  
and $\ag{\Delta}_{\po}$ (below on the right) is a PO in $\PAG$. 
$$ \begin{array}{lll}
\Delta_{\po}: & \quad & \ag{\Delta}_{\po}: \\ 
\xymatrix@C=4pc{  \ar@{}[rd]|{(PO)}
K \ar[r]^{r} \ar[d]_{m_K} & 
  R \ar[d]^{m_R} \\
D \ar[r]^{r_1} & 
  H \\
} &&
\xymatrix@C=4pc{  \ar@{}[rd]|{(PO)}
(K,A,\att_K) \ar[r]^{(r,\id_A)} \ar[d]_{(m_K,a)} & 
  (R,A,\att_R) \ar[d]^{(m_R,a)} \\
(D,A_1,\att_D) \ar[r]^{(r_1,\id_{A_1})} & 
  (H,A_1,\att_H)  \\
} \\ 
\end{array} $$ 
\end{proposition}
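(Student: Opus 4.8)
The plan is to verify the pushout property of $\ag{\Delta}_{\po}$ in $\PAG$ directly from the universal property, using the concrete description of pushouts of injections in $\Set$ (Remark~\ref{remark:po-sets}) transported through $S$, which preserves POs. First I would check that the objects and morphisms in $\ag{\Delta}_{\po}$ are well-defined: that $\att_H$ is a well-defined partial map $SH\pto JA_1$ (it is, since $SH = SR + C$ is a disjoint union and we define $\att_H$ piecewise, as $Ja\circ\att_R$ on $SR$ and as $\att_D$ on $C$); that $(m_R,a)$ is a matching, i.e. $Sm_R$ is injective (true, it is the canonical inclusion $SR\hookrightarrow SH$) and $(m_R,a)$ is a morphism in $\PAG$ with the strictness needed for a match; and that $(r_1,\id_{A_1})$ preserves attributes and is a morphism in $\PAG$, which amounts to checking $\att_H\circ Sr_1 \geq JA_1$-image of $\att_R$ — on $SK\subseteq SD$ one computes $\att_H(Sr_1(x)) = \att_H(Sr(x)) = Ja(\att_R(Sr(x)))$ when $x\in SK$, and on $C$ one has $\att_H(Sr_1(x)) = \att_H(x) = \att_D(x)$, so the inequality holds (in fact becomes an equality away from the mismatch, matching the case analysis in the paragraph after Theorem~\ref{theo:sqpo}).

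Next I would check commutativity of $\ag{\Delta}_{\po}$, i.e. $(r_1,\id_{A_1})\circ(m_K,a) = (m_R,a)\circ(r,\id_A)$. On underlying structures this is commutativity of $\Delta_{\po}$, which holds by hypothesis; on attributes it is $\id_{A_1}\circ a = a\circ\id_A$, which is trivial. Then for the universal property, suppose given an attributed structure $\ag{Z}=(Z,B,\att_Z)$ with morphisms $(u,b):\ag{D}\to\ag{Z}$ and $(v,c):\ag{R}\to\ag{Z}$ agreeing after precomposition with the span legs. On attributes, $b\circ a = c\circ\id_A = c$, so the two attribute morphisms must factor through $A_1$; since $A_1$ is just $A$ here (the maps $(m_K,a)$ and $(m_R,a)$ and their targets all carry the object $A_1$, with $a:A\to A_1$ the shared attribute morphism), the induced attribute morphism $A_1\to B$ is forced and unique. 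On underlying structures, $u\circ m_K = v\circ r$ on $\G$, so since $\Delta_{\po}$ is a PO in $\G$ there is a unique $w:H\to Z$ with $w\circ r_1 = u$ and $w\circ m_R = v$. The only remaining obligation is that $(w, \text{the induced attribute map}):\ag{H}\to\ag{Z}$ is actually a morphism in $\PAG$, i.e. compatible with $\att_H$ and $\att_Z$.

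That last point — verifying $(w,\cdot)$ respects attributes — is the step I expect to be the main obstacle, and it is where $S$ preserving POs is used essentially: $Sw$ is determined on $SH = SR+C$ as $Sv$ on $SR$ and as (the $C$-part of) $Su$ on $C$, because $SH$ is the pushout $S\Delta_{\po}$ in $\Set$ and hence $Sr_1 = Sr+\id_C$ by Remark~\ref{remark:po-sets}. Then for $x\in SR$ one checks $x\in\D(\att_H)$ iff $\att_R(x)$ is defined, and in that case $\att_H(x) = Ja(\att_R(x))$, so $\att_Z(Sw(x)) = \att_Z(Sv(x))$; using that $(v,c)$ is a morphism this dominates $Jc(\att_R(x))$, and $c$ factors through $a$ giving exactly the $JA_1$-image of $\att_H(x)$. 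For $x\in C$ one uses instead that $(u,b)$ is a morphism and $\att_H(x) = \att_D(x)$, so $\att_Z(Sw(x)) = \att_Z(Su(x)) \geq Jb(\att_D(x))$, again of the required form. The piecewise definition of $\att_H$ matches the piecewise definition of $Sw$ exactly, so the inequality defining a $\PAG$-morphism holds on each summand and hence on all of $SH$. I would also remark that uniqueness of $(w,\cdot)$ is immediate from uniqueness of $w$ in $\G$ and uniqueness of the attribute factorization, completing the verification that $\ag{\Delta}_{\po}$ is a PO in $\PAG$.
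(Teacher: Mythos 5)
Your proposal is correct and follows essentially the same route as the paper's proof: piecewise definitions over the decompositions $SD=SK+C$ and $SH=SR+C$, componentwise verification of the morphism conditions, and the universal property reduced to the pushout in $\G$ together with the forced attribute morphism $b$ on the $\A$-component. The only point you gloss over is the verification that $\att_H\circ Sr_1\geq\att_D$ on the $SK$-summand, where you compute the left-hand side $Ja\circ\att_R\circ Sr$ but omit the comparison with $\att_D\circ Sm_K=Ja\circ\att_K$ --- this is exactly where the strictness of the matching $(m_K,a)$, combined with $\att_R\circ Sr\geq\att_K$, is needed, and the paper spells it out.
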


\begin{proof} 
We have:
  \begin{itemize}
  \item $SD=SK+C$ with coprojections $Sm_K:SK\to SD$ and (say) $i_D:C\to SD$, 
  \item $SH=SR+C$ with coprojections $Sm_R:SR\to SH$ and (say) $i_H:C\to SH$, 
  \item $Sr_1$ is characterized by 
  $Sr_1\circ i_D = i_H$ and $Sr_1\circ Sm_K= Sm_R\circ Sr$,
  \item $\att_H$ is defined by 
  $\att_H \circ i_H = \att_D \circ i_D$ 
  and $\att_H \circ Sm_R = Ja\circ \att_R$. 
  \end{itemize}

We have to check the following properties.
\begin{itemize}

\item $(m_R,a)$ is a matching in $\PAG$. 
Indeed, $\att_H\circ Sm_R = Ja \circ \att_R:SR\to JA_1$ 
by definition of $\att_H$ and $Sm_R$ is an injection by assumption.

\item $(r_1,\id_{A_1})$ is a morphism in $\PAG$.
This means that $\att_H\circ Sr_1 \geq \att_D : SD\to JA_1$, 
which is equivalent
to $\att_H\circ Sr_1 \circ i_D \geq \att_D \circ i_D : C\to JA_1$ 
and $\att_H\circ Sr_1 \circ Sm_K \geq \att_D \circ Sm_K: SK \to JA_1$. 
  \begin{itemize}
  \item On $i_D(C)$ we have 
  $\att_H \circ Sr_1\circ i_D = \att_H \circ i_H = \att_D \circ i_D$.
  \item On $m_K(SK)$ we have 
  $\att_H \circ Sr_1\circ Sm_K = \att_H \circ Sm_R \circ Sr 
  = Ja \circ \att_R \circ Sr$.
  We know that $\att_R \circ Sr \geq \att_K$ 
  because $(r,\id_A)$ is a morphism in $\PAG$, 
  thus we get $\att_H \circ Sr_1\circ Sm_K \geq Ja \circ \att_K$.
  And we know that $\att_D \circ Sm_K  = Ja \circ \att_K$ 
  because $(m_K,a)$ is a \emph{strict} morphism in $\PAG$, 
  so that we get   $\att_H \circ Sr_1\circ Sm_K \geq \att_D \circ Sm_K$.
  \end{itemize}

\item $\ag{\Delta}_{\po}$ is a commutative square in $\PAG$.
This is obvious, since both its underlying square in $\G$ 
and its underlying square in $\A$ are  commutative. 

\item $\ag{\Delta}_{\po}$ is a PO in $\PAG$.
Let us consider a commutative square in $\PAG$: 
$$ \xymatrix@C=6pc{  \ar@{}[rd]|{(=)}
(K,A,\att_K) \ar[r]^{(r,\id_A)} \ar[d]_{(m_K,a)} & 
  (R,A,\att_R) \ar[d]^{(m',a')} \\
(D,A_1,\att_D) \ar[r]^{(r',b))} & 
  (H',A',\att')  \\
} $$ 
and let us look for a morphism 
$(f,c): (H,A_1,\att_H)\to (H',A',\att')$ in $\PAG$
such that $(f,c)\circ (m_R,a)=(m',a')$ 
and $(f,c)\circ (r_1,\id_{A_1})=(r',b)$.
$$ \xymatrix@C=6pc{
(K,A,\att_K) \ar[r]^{(r,\id)} \ar[d]_{(m_K,a)} & 
  (R,A,\att_R) \ar[d]^(.6){(m_R,a)} \ar[ddr]^{(m',a')}  & \\
(D,A_1,\att_D) \ar[r]^{(r_1,\id)} \ar[drr]_{(r',b)} & 
  (H,A_1,\att_H) \ar@{-->}[dr]^{(f,c)} & \\
& & (H',A',\att') \\ 
} $$
Thus, we are looking for:
  \begin{itemize}
  \item A morphism 
  $f: H\to H'$  in $\G$ such that 
  $f\circ m_R=m'$ and $f\circ r_1=r'$:
  there is exactly one choice for $f$ 
  since $\Delta_{\po}$ is a PO in $\G$.
  \item A morphism 
  $c: A_1\to A'$  in $\A$ such that 
  $c\circ a=a'$ and $c\circ\id_{A_1}=b$:
  there is exactly one choice for $c$ (since $b\circ a=a'$),
  it is $c=b$. 
  \end{itemize}
Now we have to check that $\att' \circ Sf \geq Jb \circ \att_H : SH\to JA'$. 
This is equivalent
to $\att' \circ Sf \circ i_H \geq Jb \circ \att_H \circ i_H : C\to JA'$  
and $\att' \circ Sf \circ Sm_R \geq Jb \circ \att_H \circ Sm_R : SR\to JA'$. 
  \begin{itemize}
  \item Since $i_H = Sr_1\circ i_D $ and $f\circ r_1=r'$ 
  we have $\att' \circ Sf \circ i_H = 
  \att' \circ Sf \circ Sr_1 \circ i_D =
  \att' \circ Sr' \circ i_D $. 
  But $\att' \circ Sr' \geq Jb \circ \att_D$ because 
  $(r',b)$ is a morphism in $\PAG$, 
  thus $\att' \circ Sf \circ i_H \geq Jb \circ \att_D \circ i_D $. 
  The definition of $\att_H$ states that $\att_D \circ i_D = \att_H \circ i_H$, 
  so that  $\att' \circ Sf \circ i_H \geq Jb \circ \att_H \circ i_H$. 
  \item Since $f\circ m_R=m'$ 
  we have $\att' \circ Sf \circ Sm_R = \att' \circ Sm'$. 
  But $\att' \circ Sm' \geq Ja' \circ \att_R$ because  
  $(m',a')$ is a morphism in $\PAG$, 
  thus  $\att' \circ Sf \circ Sm_R  \geq Ja' \circ \att_R$. 
  Since $a'=b\circ a$ we get 
  $\att' \circ Sf \circ Sm_R  \geq Jb \circ Ja \circ \att_R$. 
  And we know that $\att_H \circ Sm_R  = Ja \circ \att_R$ 
  because $(m_R,a)$ is a \emph{strict} morphism in $\PAG$, 
  so that we get $\att' \circ Sf \circ Sm_R \geq Jb \circ \att_H \circ Sm_R$. 
  \end{itemize}

\end{itemize}
\end{proof} 

Let us summarize what may occur in $\ag{\Delta}_{\po}$ 
for an element $x\in SD$, using the previous conventions for notations
(here $t$ denotes an element of $TA$ and $t_1$ an element of $TA_1$).
If $x\in C$ then two cases may occur: 
$$
\xymatrix@R=1pc@C=1pc{
x:t_1 \ar@{|->}[r] & x:t_1 \\ 
}
\qquad 
\xymatrix@R=1pc@C=1pc{
x:\bot \ar@{|->}[r] & x:\bot \\ 
}
$$
If $x\in SK$ then three cases may occur: 
$$
\xymatrix@R=1pc@C=1pc{
x:t \ar@{|->}[r] \ar@{|->}[d] & r(x):t \ar@{|->}[d] \\ 
x:a(t) \ar@{|->}[r] & r(x):a(t) \\ 
}
\qquad 
\xymatrix@R=1pc@C=1pc{
x:\bot \ar@{|->}[r] \ar@{|->}[d] & r(x):t \ar@{|->}[d] \\ 
x:\bot \ar@{|->}[r] & r(x):a(t) \\ 
}
\qquad 
\xymatrix@R=1pc@C=1pc{
x:\bot \ar@{|->}[r] \ar@{|->}[d] & r(x):\bot \ar@{|->}[d] \\ 
x:\bot \ar@{|->}[r] & r(x):\bot \\ 
}
$$

\section{Lifting Final Pullback Complements to attributed structures}
\label{app:fpbc}

\begin{remark}
\label{remark:pb-sets}
In the category of sets, 
let us consider a commutative square as follows: 
$$ \xymatrix@C=6pc{ 
X \ar[d]_{i} & X' \ar[l]_{f} \ar[d]^{i'} \\
Y & Y' \ar[l]_{f_1} \\
} $$ 
Let us assume that $i$ is an inclusion, so that $Y=X+C$. 
This square is a PB (up to isomorphism) if and only if
$i'$ is an inclusion, so that $Y'=X'+C'$, 
and $f_1=f+\gamma$ for some $\gamma:C'\to C$:   
$$ \xymatrix@C=6pc{ \ar@{}[rd]|{(PB)}
X \ar[d]_{i} & X' \ar[l]_{f} \ar[d]^{i'} \\
Y=X+C & Y'=X'+C' \ar[l]_{f_1=f+\gamma} \\
} $$ 
It follows that injections are stable under PBs in $\Set$, 
which is a special instance of the well-known fact that 
monomorphisms are stable under PBs in all categories. 
Moreover, this PB is a FPBC (up to isomorphism) if and only if
$C'=C$ and $\gamma=\id_C$:   
$$ \xymatrix@C=6pc{ \ar@{}[rd]|{(FPBC)}
X \ar[d]_{i} & X' \ar[l]_{f} \ar[d]^{i'} \\
Y=X+C & Y'=X'+C \ar[l]_{f_1=f+\id_C} \\
} $$ 
\end{remark} 

\begin{lemma}[PBs in $\PAG$]
\label{lemm:pb}
Let us assume that the functors 
$U_{\G}:\PAG\to\G$, $U_{\A}:\PAG\to\A$, 
$S:\G\to\Set$ and $J:\A\to\Set$ preserve PBs. 
Let us consider a PB in $\PAG$: 
$$ \xymatrix{  \ar@{}[rd]|{(PB)}
(L,A,\att_L) \ar[d]_{(m_L,a)} & 
  (K,A',\att_K) \ar[l]_{(l,b)} \ar[d]^{(m_K,a')} \\
(G,A_1,\att_G) & (D,A'_1,\att_D) \ar[l]_{(l_1,b_1)}
}$$
If $(m_L,a)$ is a matching then $(m_K,a')$ is a matching 
and $(Sl_1)^{-1}(Sm_L(SL)) \subseteq Sm_K(SK)$.
\end{lemma}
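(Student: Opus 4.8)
The plan is to transport the pullback square down to $\Set$ by means of the four preservation hypotheses and then to read the two conclusions off Remark~\ref{remark:pb-sets}, the only point requiring real work being the strictness of $(m_K,a')$ (recall that a matching in $\PAG$ is a strict morphism whose underlying structure map is injective). Since $U_{\G}$ preserves pullbacks, the underlying square in $\G$ (with arrows $l$, $m_L$, $m_K$, $l_1$) is a PB, and since $S$ preserves pullbacks its image under $S$ is a PB in $\Set$ whose left vertical map is $Sm_L$; likewise $U_{\A}$ and $J$ preserving pullbacks give $A'=A\times_{A_1}A'_1$ and $JA'=JA\times_{JA_1}JA'_1$. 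As $(m_L,a)$ is a matching, $Sm_L$ is injective, hence up to isomorphism an inclusion with $SG=SL+C$; applying Remark~\ref{remark:pb-sets} to the $\Set$-pullback then yields that $Sm_K$ is an inclusion, that $SD=SK+C'$, and that $Sl_1=Sl+\gamma$ for some $\gamma\colon C'\to C$. Injectivity of $Sm_K$ is the structural half of ``$(m_K,a')$ is a matching''; moreover an element of $SD$ is carried into $SL$ by $Sl_1$ exactly when it lies in $SK$ (elements of $C'$ go into $C$, which is disjoint from $SL$ in $SG$), so in fact $(Sl_1)^{-1}(Sm_L(SL))=Sm_K(SK)$, whence the claimed inclusion.

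It remains to show that $(m_K,a')$ is strict. It is already a morphism of $\PAG$, so what must be added is the implication $Sm_K(x)\in\D(\att_D)\Rightarrow x\in\D(\att_K)$ for $x\in SK$. Put $y=Sl(x)$ and $z=Sm_K(x)$, so $Sm_L(y)=Sl_1(z)$. From $z\in\D(\att_D)$ and the fact that $(l_1,b_1)$ is a morphism, $Sl_1(z)\in\D(\att_G)$, i.e.\ $Sm_L(y)\in\D(\att_G)$; and since $(m_L,a)$ is \emph{strict}, this forces $y\in\D(\att_L)$. Then $Ja(\att_L(y))=\att_G(Sm_L(y))=\att_G(Sl_1(z))=Jb_1(\att_D(z))$, so $(\att_L(y),\att_D(z))$ is a compatible pair in the pullback $JA'=JA\times_{JA_1}JA'_1$ and determines an element $t^{*}\in JA'$ with $Jb(t^{*})=\att_L(y)$ and $Ja'(t^{*})=\att_D(z)$. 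Suppose, for contradiction, that $x\notin\D(\att_K)$, and let $\att_K^{+}$ be the extension of $\att_K$ with $\att_K^{+}(x)=t^{*}$. One checks directly that $(l,b)$ and $(m_K,a')$ are still morphisms out of $(K,A',\att_K^{+})$, so this object with those two legs is a cone over the cospan $(L,A,\att_L)\to(G,A_1,\att_G)\leftarrow(D,A'_1,\att_D)$, commutativity being inherited from the underlying $\G$- and $\A$-squares. Its induced morphism to the pullback $(K,A',\att_K)$ has underlying components forced to be $\id_K$ and $\id_{A'}$ by uniqueness in the $\G$- and $\A$-pullbacks, but $(\id_K,\id_{A'})$ is not a morphism $(K,A',\att_K^{+})\to(K,A',\att_K)$ in $\PAG$, since $\att_K\not\geq\att_K^{+}$; contradiction. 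Hence $x\in\D(\att_K)$, and $(m_K,a')$ is strict, which completes the proof that it is a matching.

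The reduction to $\Set$ above is routine and uses only that $U_{\G}$ and $S$ preserve pullbacks. I expect the main obstacle to be the strictness of $(m_K,a')$: this is precisely where one needs that $(m_L,a)$ is strict (and not merely injective on structures), together with $(l_1,b_1)$ being a morphism and the pullback-preservation of $U_{\A}$ and $J$. The same step can alternatively be obtained from the explicit form of pullbacks in $\PAG$, where $\D(\att_K)=\{x\in SK : Sl(x)\in\D(\att_L)\ \text{and}\ Sm_K(x)\in\D(\att_D)\}$.
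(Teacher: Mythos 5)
Your proof is correct and follows essentially the same route as the paper's: Remark~\ref{remark:pb-sets} applied to the $\Set$-image of the square gives injectivity of $Sm_K$ and the inclusion $(Sl_1)^{-1}(Sm_L(SL))\subseteq Sm_K(SK)$, and strictness of $(m_K,a')$ is obtained by the same contradiction — using strictness of $(m_L,a)$ and the morphism $(l_1,b_1)$ to produce a compatible pair in $JA'=JA\times_{JA_1}JA'_1$, extending $\att_K$ at the offending element, and observing that the forced mediating morphism $(\id_K,\id_{A'})$ fails to be a $\PAG$-morphism. Your version is if anything slightly more explicit than the paper's at the final step, where you justify via the underlying pullbacks why the mediating morphism must have identity components.
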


\begin{proof} 
Since $S\circ U_{\G}:\PAG\to\Set$ preserves PBs, 
it follows from Remark~\ref{remark:pb-sets} 
that $m_K$ is injective and that $(Sl_1)^{-1}(Sm_L(SL)) \subseteq Sm_K(SK)$. 
It remains to prove that $(m_K,a')$ is strict. 
If this does not hold, then there is some 
$x\in SK$ such that $x\not\in\D(\att_K)$ and $Sm_K(x)\in\D(\att_D)$. 
Let $y=Sm_K(x)$, $x_0=Sl(x)$ and $y_0=l_1(y)=m_L(x_0)$. 
Since $y\in\D(\att_D)$, 
$(l_1,b_1)$ is a morphism and $(m_L,a)$ a strict morphism, 
we have $y_0\in\D(\att_G)$ and $x_0\in\D(\att_L)$.
More precisely, let $y:u$ and $x_0:t_0$, 
then $y_0:u_0$ where $u_0=a(t_0)=b_1(u)$. 
Since $J\circ U_{\A}:\PAG\to\Set$ preserves PBs, 
there is a unique $t\in JA'$ such that $t_0=Jb(t)$ and $u=Ja'(t)$. 
Let us define $\ag{K'}=(K,A',\att'_K)$ 
where $\att_K'$ extends $\att_K$ simply by mapping $x$ to $t$.
Then $b$ and $a'$ may be extended accordingly. 
The resulting square is commutative in $\PAG$  
but there is no morphism from $\ag{K'}$ to $\ag{K}$ in $\PAG$, 
which contradicts the fact that the given square is a PB in $\PAG$.
\end{proof} 

\begin{proposition}
\label{prop:fpbc}
Let us assume that the functors 
$U_{\G}:\PAG\to\G$, $U_{\A}:\PAG\to\A$, $S:\G\to\Set$ and $J:\A\to\Set$
preserve PBs. 
Then the functor $U_\G:\PAG\to\G$ lifts FPBCs 
of matchings along attribute-preserving morphisms, 
in the following precise way.
Let $(l,\id_A):\ag{K}\to\ag{L}$ be a morphism in $\PAG$.
Let $(m_L,a):\ag{L}\to\ag{G}$ be a matching,
so that we can denote $SG=SL+C$ with $Sm_L$ the canonical inclusion.
Let $\Delta_{\fpbc}$ (below on the left) be a FPBC in $\G$, 
then 
we can denote $SD=SK+C_1$ with $Sm_K$ the canonical inclusion
and $Sl_1=Sl+\gamma$ for some $\gamma:C_1\to C$, 
by Remark~\ref{remark:pb-sets}. 
Let $\att_D:SD\pto JA_1$ be defined as $Ja\circ \att_K$ on $SK$ 
and as $\att_G\circ Sl_1$ on $C_1$. 
Then $(m_K,a)$ is a matching, 
$(l_1,\id_A)$ is a morphism in $\PAG$  
and $\ag{\Delta}_{\fpbc}$ (below on the right) is a FPBC in $\PAG$.
$$ \begin{array}{lll}
\Delta_{\fpbc}: & \quad & \ag{\Delta}_{\fpbc}: \\ 
\xymatrix@C=4pc{  \ar@{}[rd]|{(FPBC)}
L \ar[d]_{m_L} & 
  K \ar[l]_{l} \ar[d]^{m_K} \\
G & D \ar[l]_{l_1}
} &&
\xymatrix@C=4pc{  \ar@{}[rd]|{(FPBC)}
(L,A,\att_L) \ar[d]_{(m_L,a)} & 
  (K,A,\att_K) \ar[l]_{(l,\id_A)} \ar[d]^{(m_K,a)} \\
(G,A_1,\att_G) & (D,A_1,\att_D) \ar[l]_{(l_1,\id_{A_1})}
} \\ 
\end{array} $$ 
\end{proposition}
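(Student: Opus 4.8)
The plan is to mirror the structure of the pushout case (Proposition~\ref{prop:po}), exploiting the finality property of FPBCs. First I would record the set-level picture already assembled in the statement: by Remark~\ref{remark:pb-sets} applied to $S\circ U_\G$, we have $SD=SK+C_1$ with $Sm_K$ the canonical inclusion and $Sl_1=Sl+\gamma$ for some $\gamma\colon C_1\to C$, and the proposed $\att_D$ agrees with $Ja\circ\att_K$ on $SK$ and with $\att_G\circ Sl_1$ on $C_1$. Then I would check, exactly as in the proof of Proposition~\ref{prop:po}, the four routine facts: (i) $(m_K,a)$ is a \emph{strict} morphism (hence a matching), since on $SK$ we have $\att_D\circ Sm_K=Ja\circ\att_K$ by construction; (ii) $(l_1,\id_A)$ is a morphism in $\PAG$, i.e. $\att_L\circ Sl_1\geq \att_D$, which splits into the $C_1$ part (where $\att_D=\att_G\circ Sl_1$ and $Sl_1$ lands in $C$, so we compare $\att_G\circ Sl_1$ with $\att_G\circ Sl_1$ — actually with the appropriate restriction, using that $(m_L,a)$ strict forces the domain of $\att_G$ on $L$ to match) and the $SK$ part (where we use $\att_L\circ Sl\geq\att_K$ from $(l,\id_A)$ being a morphism, together with strictness of $(m_L,a)$ and $(m_K,a)$); and (iii) that $\ag\Delta_\fpbc$ is a commutative square in $\PAG$, which is immediate since both underlying squares commute. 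By Lemma~\ref{lemm:pb} the underlying square is moreover a PB in $\PAG$.

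The substantive part is verifying the universal (finality) property: given any other PB square in $\PAG$ over the same cospan $(m_L,a)$ along $(l,\id_A)$, say with apex $(K',A',\att'_{K'})$ and morphism $(m',b')\colon(K',A',\att'_{K'})\to(G,A_1',\att'_G)$ wait — actually the comparison object in the FPBC diagram has the \emph{same} lower-left corner $(G,A_1,\att_G)$; the data is a PB $(L,A,\att_L)\xleftarrow{(l',\id)}(K',A,\att'_{K'})$, $(G,A_1,\att_G)\xleftarrow{(l_1',\id)}(D',A_1,\att'_{D'})$, together with a morphism $(f,c)\colon (K',A,\att'_{K'})\to(K,A,\att_K)$ satisfying $(l,\id)\circ(f,c)=(l',\id)$; I must produce a unique $(f_1,c_1)\colon(D',A_1,\att'_{D'})\to(D,A_1,\att_D)$ with $(l_1,\id)\circ(f_1,c_1)=(l_1',\id)$ and $(f_1,c_1)\circ(m'_{K'},\ast)=(m_K,a)\circ(f,c)$. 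On underlying structures, $f_1$ is determined uniquely because $\Delta_\fpbc$ is a FPBC in $\G$; on underlying attributes, $c_1$ must be $\id_{A_1}$ since both legs carry identities. So the entire content reduces to checking one partial-map inequality: that $(f_1,\id_{A_1})$ is a morphism in $\PAG$, i.e. $\att_D\circ Sf_1\geq \att'_{D'}$. I would split this over the decomposition $SD'=SK'+C_1'$ (from the PB structure of the comparison square) and over $SK=SK'$-image versus $C_1$: on the $C_1'$ part, using $Sl_1\circ Sf_1=Sl_1'$ and the definition of $\att_D$ on $C_1$ together with $(l_1',\id)$ being a morphism and $\att_G\circ Sl_1 = $ the relevant restriction; on the $SK'$ part, using $Sl\circ Sf=Sl'$, the definition of $\att_D$ on $SK$ as $Ja\circ\att_K$, strictness of $(m_K,a)$ and $(m_L,a)$, and the morphism condition for $(f,c)$ and for $\att'_{D'}$.

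I expect the main obstacle to be the $C_1$-component bookkeeping in the finality argument: unlike the PO case, the map $\gamma\colon C_1\to C$ need not be the identity (it need not even be injective), so $\att_D$ on $C_1$ is a genuine \emph{pullback} of $\att_G$ along $Sl_1$, and one must be careful that elements of $C_1$ with $Sl_1$-image inside $Sm_L(SL)$ have already been absorbed into $SK$ (this is exactly the inclusion $(Sl_1)^{-1}(Sm_L(SL))\subseteq Sm_K(SK)$ from Lemma~\ref{lemm:pb}), so that on $C_1$ proper the image $\gamma(C_1)$ lies in $C$ and $\att_D|_{C_1}=\att_G\circ Sl_1|_{C_1}$ is forced, not merely chosen. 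Establishing strictness-type statements in this partial setting — showing a \emph{candidate} extension of $\att'_{K'}$ would break the PB property of the comparison square, as in the proof of Lemma~\ref{lemm:pb} — is the delicate step, and I would invoke that lemma rather than redo it. Once strictness and these domain inclusions are in hand, the inequality $\att_D\circ Sf_1\geq\att'_{D'}$ follows componentwise by the same chasing as in Proposition~\ref{prop:po}, and uniqueness of $(f_1,\id_{A_1})$ is inherited from $\G$.
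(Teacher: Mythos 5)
Your overall plan is sound for the finality half, but there is a genuine gap at the point where you write ``By Lemma~\ref{lemm:pb} the underlying square is moreover a PB in $\PAG$.'' Lemma~\ref{lemm:pb} runs in the opposite direction: it \emph{assumes} that a square is a PB in $\PAG$ and deduces that $(m_K,a')$ is strict and that $(Sl_1)^{-1}(Sm_L(SL))\subseteq Sm_K(SK)$; it cannot be used to establish that the square you have just constructed is a pullback in $\PAG$. Since a FPBC is by definition a pullback equipped with a finality property, you must prove the PB property of $\ag{\Delta}_{\fpbc}$ separately, and this is one of the two substantive universal-property verifications in the paper's proof. Concretely, given a commutative square in $\PAG$ over the same cospan with apex $(K',A',\att')$, the mediating morphisms $f$ (from the PB in $\G$) and $c=b$ (on attributes) exist and are unique, but you still have to show $\att_K\circ Sf\geq Jb\circ\att'$, and this is not automatic from the preservation of PBs by $U_{\G}$, $U_{\A}$, $S$, $J$. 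The paper's argument introduces $\varphi=\att_K\circ Sf$, $\varphi'=Jb\circ\att'$ and the common upper bound $\varphi''=\att_L\circ Sl'$ to get agreement of $\varphi$ and $\varphi'$ on $\D(\varphi)\cap\D(\varphi')$, and then derives $\D(\varphi')\subseteq\D(\varphi)$ from $Ja\circ\varphi\geq Ja\circ\varphi'$ (which follows from strictness of $(m_K,a)$ and the morphism property of $(m',a')$), using that $Ja$ is total. None of this appears in your proposal, and without it the conclusion that $\ag{\Delta}_{\fpbc}$ is a FPBC does not follow.

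Two smaller points. First, in the finality check you restrict the comparison pullback to have identity attribute components ($(l',\id)$, $(l'_1,\id)$ with the same objects $A$ and $A_1$); the definition of FPBC in $\PAG$ quantifies over arbitrary PBs, so the attribute legs are general morphisms $(l',b)$, $(m',a')$, $(l'_1,b_1)$, and $c_1$ is forced to equal $b_1$ rather than $\id_{A_1}$ --- an easy fix, but as written your uniqueness argument for $c_1$ does not cover the general case. Second, your use of Lemma~\ref{lemm:pb} \emph{inside} the finality argument --- applied to the given comparison PB to obtain strictness of $(m',a')$ and the inclusion $(Sl'_1)^{-1}(Sm_L(SL))\subseteq Sm'(SK')$, followed by the case split on whether $y'$ lies in the image of $Sm'$ --- is exactly the paper's argument and is the correct use of that lemma; it is only the application to the constructed square that is backwards.
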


\begin{proof} 
We have: 
  \begin{itemize}
  \item $SD=SK+C_1$ with coprojections $Sm_K:SK\to SD$ 
  and (say) $i_D:C_1\to SD$, 
  \item $\att_D$ is defined by 
  $\att_D \circ i_D = \att_G \circ Sl_1 \circ i_D$ 
  and $\att_D \circ Sm_K = Ja \circ \att_K$. 
  \end{itemize}

We have to check the following properties.
\begin{itemize}

\item $(m_K,a)$ is a matching in $\PAG$. 
Indeed, $\att_D\circ Sm_K = Ja \circ \att_K:SK\to JA_1$
by definition of $\att_D$ and $Sm_K$ is an injection by assumption.

\item $(l_1,\id_{A_1})$ is a morphism in $\PAG$. 
This means that $\att_G\circ Sl_1 \geq \att_D : SD\to JA_1$, 
which is equivalent
to $\att_G\circ Sl_1 \circ i_D \geq \att_D \circ i_D : C_1\to JA_1$ 
and $\att_G\circ Sl_1 \circ Sm_K \geq \att_D \circ Sm_K: SK \to JA_1$.
  \begin{itemize} 
  \item On $i_D(C_1)$ we have  
  $\att_G \circ Sl_1\circ i_D = \att_D \circ i_D$.
  \item On $m_K(SK)$ we have 
  $\att_G \circ Sl_1\circ Sm_K = \att_G \circ Sm_L \circ Sl$ 
  since the square $\Delta_{\fpbc}$ is commutative.
  We know that $\att_G \circ Sm_L  \geq Ja \circ \att_L$ 
  because $(m_K,a)$ is a morphism in $\PAG$, 
  and that $\att_L \circ Sl \geq \att_K$ 
  because $(l,\id_A)$ is a morphism in $\PAG$, 
  thus we get $\att_G \circ Sm_L \circ Sl
  \geq Ja \circ \att_L \circ Sl
  \geq Ja \circ \att_K$,
  so that $\att_G \circ Sl_1\circ Sm_K  \geq Ja \circ \att_K$. 
  From the definition of $\att_D$ on $m_K(SK)$ we get 
  $\att_G \circ Sl_1\circ Sm_K  \geq \att_D \circ Sm_K$. 
  \end{itemize}

\item $\ag{\Delta}_{\fpbc}$ is a commutative square in $\PAG$. 
This is obvious, since both its underlying square in $\G$ 
and its underlying square in $\A$ are commutative. 

\item $\ag{\Delta}_{\fpbc}$ is a PB in $\PAG$.
Let us consider a commutative square in $\PAG$: 
$$\xymatrix@C=6pc{  \ar@{}[rd]|{(=)}
(L,A,\att_L) \ar[d]_{(m_L,a)} & 
  (K',A',\att') \ar[l]_{(l',b)} \ar[d]^{(m',a')} \\
(G,A_1,\att_G) & (D,A_1,\att_D) \ar[l]_{(l_1,\id_{A_1})}
}$$
and let us look for a morphism 
$(f,c): (K',A',\att')\to (K,A,\att_K)$ in $\PAG$
such that $(m_K,a)\circ(f,c) =(m',a')$ 
and $(l,\id_A)\circ(f,c) =(l',b)$.
$$ \xymatrix@C=6pc{
& & (K',A',\att') \ar[lld]_{(l',b)} \ar@{-->}[ld]^{(f,c)} \ar[ldd]^{(m',a')} \\ 
(L,A,\att_L) \ar[d]_{(m_L,a)} & 
  (K,A,\att_K) \ar[l]^{(l,\id)} \ar[d]^(.4){(m_K,a)} & \\
(G,A_1,\att_G) & 
  (D,A_1,\att_D) \ar[l]_{(l_1,\id)} & \\
} $$
Thus, we are looking for:
  \begin{itemize}
  \item A morphism 
  $f: K'\to K$  in $\G$ such that 
  $m_K\circ f=m'$ and $l\circ f=l'$:
  there is exactly one choice for $f$ 
  since $\Delta_{\fpbc}$ is a PB in $\G$. 
  \item A morphism 
  $c: A'\to A$  in $\A$ such that 
  $a\circ c=a'$ and $\id_A\circ c=b$:
  there is exactly one choice for $c$ (since $a\circ b=a'$),
  it is $c=b$. 
  \end{itemize}
Now we have to check that $\att_K \circ Sf \geq Jb \circ \att' : SK'\to JA$. 
Let us denote $\varphi=\att_K \circ Sf$ and $\varphi'=Jb \circ \att'$.
  \begin{itemize}
  \item 
  Let $\varphi''=\att_L \circ Sl'$ and let us prove that 
  $\varphi'' \geq \varphi$ and $\varphi'' \geq \varphi'$. 
  Since $(l,\id_A)$ is a morphism in $\PAG$ 
  we have $\att_L \circ Sl \geq \att_K$,
  and since $l'=l\circ f$ this implies that 
  $\varphi''=\att_L \circ Sl' \geq \att_K\circ Sf = \varphi$.
  Since $(l',b)$ is a morphism in $\PAG$ 
  we have $\varphi''=\att_L \circ Sl' \geq Jb \circ \att' = \varphi'$. 
  Thus, there is a partial function $\varphi''$ 
  such that $\varphi'' \geq \varphi$ and $\varphi'' \geq \varphi'$.
  It follows that $\varphi$ and $\varphi'$ coincide on 
  $\D(\varphi') \cap \D(\varphi)$. 
  So, if we can prove that $\D(\varphi') \subseteq \D(\varphi)$ 
  then we will get $\varphi \geq \varphi'$, as required. 
  \item 
  Now, let us prove that $Ja \circ \varphi \geq Ja \circ \varphi'$. 
  Since $(m_K,a)$ is a \emph{strict} morphism in $\PAG$ 
  we have $\att_D \circ Sm_K  = Ja \circ \att_K$, 
  so that $Ja \circ \varphi = Ja \circ \att_K \circ Sf 
  = \att_D \circ Sm_K \circ Sf $. 
  Since $(m',a')$ is a morphism in $\PAG$ 
  we have $\att_D \circ Sm' \geq Ja' \circ \att'$. 
  With $m'=m_K \circ f$ and $a'=a\circ b$ we get 
  $\att_D \circ Sm_K \circ Sf \geq Ja \circ Jb \circ \att'
  = Ja \circ \varphi'$. 
  Thus, we have $Ja \circ \varphi \geq Ja \circ \varphi'$. 
  This implies that $\D(Ja \circ \varphi') \subseteq \D(Ja \circ \varphi)$,
  and since $Ja$ is a total function 
  this means that $\D(\varphi') \subseteq \D(\varphi)$. 
  It follows that $\varphi \geq \varphi'$, as explained above.
  \end{itemize}

\item $\ag{\Delta}_{\fpbc}$ is a FPBC in $\PAG$.
Let us consider a PB in $\PAG$: 
$$\xymatrix@C=6pc{  \ar@{}[rd]|{(PB)}
(L,A,\att_L) \ar[d]_{(m_L,a)} & 
  (K',A',\att'_K) \ar[l]_{(l',b)} \ar[d]^{(m',a')} \\
(G,A_1,\att_G) & 
  (D',A'_1,\att'_D) \ar[l]_{(l'_1,b_1)}
}$$
with a morphism 
$(f,c): (K',A',\att'_K)\to (K,A,\att_K)$ in $\PAG$
such that $(l,\id_A)\circ(f,c) =(l',b)$, 
which implies that $c=b$.
And let us look for a morphism 
$(f_1,c_1): (D',A'_1,\att'_D)\to (D,A_1,\att_D)$ in $\PAG$
such that $(f_1,c_1)\circ(m',a') = (m_K,a)\circ(f,c) $ 
and $(l_1,\id_{A_1})\circ(f_1,c_1)=(l'_1,b_1)$.
$$ \xymatrix@C=6pc{
& & (K',A',\att'_K) \ar[lld]_{(l',b)} \ar[ld]^{(f,c)} \ar[ddd]^{(m',a')} \\ 
(L,A,\att_L) \ar[d]^{(m_L,a)} & 
  (K,A,\att_K) \ar[l]_{(l,\id)} \ar[d]^{(m_K,a)} & \\
(G,A_1,\att_G) & 
  (D,A_1,\att_D) \ar[l]_{(l_1,\id)} & \\
& & (D',A'_1,\att'_D) \ar[llu]^{(l'_1,b_1)} \ar@{-->}[lu]_{(f_1,c_1)} \\ 
} $$
Thus, we are looking for:
  \begin{itemize}
  \item A morphism 
  $f_1: D'\to D$  in $\G$ such that 
  $f_1\circ m' =m_K\circ f$ and $l_1\circ f_1=l'_1$:
  there is exactly one choice for $f_1$ 
  since $\Delta_{\fpbc}$ is a FPBC in $\G$.
  \item A morphism 
  $c_1: A'_1\to A_1$ in $\A$ such that 
  $c_1\circ a'=a\circ c$ and $\id_{A_1}\circ c_1=b_1$:
  there is exactly one choice for $c_1$ (since $b_1\circ a'=a\circ b$
  and $c=b$), it is $c_1=b_1$. 
  \end{itemize}
Now we have to check that 
$\att_D \circ Sf_1 \geq Jb_1 \circ \att'_D : SD'\to JA_1$. 
Let us denote $\psi=\att_D \circ Sf_1$ and $\psi'=Jb_1 \circ \att'_D$.
  \begin{itemize}
  \item
  Let $\psi''=\att_G \circ Sl'_1$ and let us prove that 
  $\psi'' \geq \psi$ and $\psi'' \geq \psi'$. 
  Since $(l_1,\id_{A_1})$ is a morphism in $\PAG$ 
  we have $\att_G \circ Sl_1 \geq \att_D$,
  and since $l'_1=l_1\circ f_1$ this implies that 
  $\psi''=\att_G \circ Sl'_1 \geq \att_D\circ Sf_1 = \psi$.
  Since $(l'_1,b_1)$ is a morphism in $\PAG$ 
  we have $\psi''=\att_G \circ Sl'_1 \geq Jb_1 \circ \att'_D = \psi'$. 
  Thus, there is a partial function $\psi''$ 
  such that $\psi'' \geq \psi$ and $\psi'' \geq \psi'$.
  It follows that $\psi$ and $\psi'$ coincide on 
  $\D(\psi') \cap \D(\psi)$. 
  So, if we can prove that $\D(\psi') \subseteq \D(\psi)$  
  then we will get $\psi \geq \psi'$, as required. 
  \item 
  Now, let us prove that $\D(\psi') \subseteq \D(\psi)$,
  i.e., $\D(Jb_1 \circ \att'_D) \subseteq \D(\att_D \circ Sf_1)$.
  Since $Jb_1$ and $Sf_1$ are total functions, 
  we have to prove that for each $y'\in SD',\;$ 
  $ y'\in\D(\att'_D) \implies Sf_1(y')\in\D(\att_D)$. 
  Thus, let us consider some $y'\in\D(\att'_D)$, let $y=Sf_1(y')$,
  and let us prove that $y\in\D(\att_D)$. 
  Let $y_0=Sl_1(y)=Sl'_1(y')$, since $(l'_1,b_1)$ is a morphism in $\PAG$
  and $y'\in\D(\att'_D)$ we have $y_0\in\D(\att_G)$.  
    \begin{itemize}
    \item If $y'=Sm'(x')$ for some element $x'$ of $SK'$, 
    then let $x_0=Sl'(x')$, so that $y_0=Sl'_1(y')=S(l'_1\circ m')(x')=
    S(m_L\circ l')(x') = Sm_L(x_0)$. 
    Since $(m_L,a)$ is a \emph{strict} morphism in $\PAG$ 
    and $y_0\in\D(\att_G)$ we have $x_0\in\D(\att_L)$. 
    In addition, $S(m_K\circ f)(x')=S(f_1\circ m')(x')=Sf_1(y')=y$. 
    Since we consider a PB in $\PAG$,
    Lemma~\ref{lemm:pb} states that the morphism $(m',a')$ is strict. 
    Thus, since $y'\in\D(\att'_D)$, we have $x'\in\D(\att'_K)$. 
    Since $x'\in\D(\att'_K)$ and 
    since $(f,b)$ and $(m_K,a)$ are morphisms in $\PAG$ 
    we have $y=S(m_K\circ f)(x')\in\D(\att_D)$, as required.  
    \item Otherwise, $y'$ is not in the image of $Sm'$. 
    Since we consider a PB in $\PAG$,  
    Lemma~\ref{lemm:pb} states that $(Sl'_1)^{-1}(Sm_L(SL)) \subseteq Sm'(SK')$.
    Thus, since $y'\not\in Sm'(SK')$, we have $y_0\not\in Sm_L(SL)$, 
    i.e., $y_0\in C_1$.  
    Then, according to the definition of $\att_D$, 
    we have $y\in\D(\att_D)$, as required. 
    \end{itemize}
  Thus, we have proved that $\D(\psi') \subseteq \D(\psi)$, 
  which implies that $\psi \geq \psi'$, as explained above.
  \end{itemize}
\end{itemize}
\end{proof} 

Let us summarize what may occur in $\ag{\Delta}_{\fpbc}$ 
for an element $x\in SD$ 
(here $t$ denotes an element of $TA$ and $t_1$ an element of $TA_1$).
If $x\in C_1$ then two cases may occur: 
$$
\xymatrix@R=1pc@C=1pc{
x:t_1 & x:t_1 \ar@{|->}[l] \\ 
}
\qquad 
\xymatrix@R=1pc@C=1pc{
x:\bot & x:\bot \ar@{|->}[l] \\ 
}
$$
If $x\in SK$ then three cases may occur: 
$$
\xymatrix@R=1pc@C=1pc{
l(x):t \ar@{|->}[d] & x:t \ar@{|->}[l] \ar@{|->}[d] \\ 
l(x):a(t) & x:a(t) \ar@{|->}[l] \\ 
}
\qquad 
\xymatrix@R=1pc@C=1pc{
l(x):t \ar@{|->}[d] & x:\bot \ar@{|->}[l] \ar@{|->}[d] \\ 
l(x):a(t) & x:\bot \ar@{|->}[l] \\ 
}
\qquad 
\xymatrix@R=1pc@C=1pc{
l(x):\bot \ar@{|->}[d] & x:\bot \ar@{|->}[l] \ar@{|->}[d] \\ 
l(x):\bot & x:\bot \ar@{|->}[l] \\ 
}
$$

\section{Graph transformations with attributes defined equationally}

In this appendix we check that the assumptions of
Theorem~\ref{theo:sqpo} are satisfied in the case of attributed graph
transformations with attributes built over equational
specifications. The functors $U_{\G}$ and $U_{\A}$ are known from
Definition~\ref{defi:under}, we have to choose the categories $\G$ and
$\A$ and the functors $S$ and $T$.
Let $\G=\Gr$ be the category of graphs.
Let $S:\G\to\Set$ be the functor which maps each graph 
to the disjoint union of its set of vertices and its set of edges. 
Let $\Sp$ be some fixed equational specification 
and let $\A=\Mod(\Sp)$ be the category of models of $\Sp$.
Let $T:\A\to\Set$ be the functor which maps each model of $\Sp$ 
to the disjoint union of the carriers sets $A_s$ for $s$ in 
some given set of sorts.

\begin{proposition}
With the definitions as above, 
the functors $U_{\G}:\PAG\to\Gr$, $U_{\A}:\PAG\to\Mod(\Sp)$, 
$S:\Gr\to\Set$ and $T:\Mod(\Sp)\to\Set$ preserve pullbacks,  
and $S$ preserves pushouts.
\end{proposition}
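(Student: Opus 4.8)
The plan is to split the five claims into a \emph{concrete} group — that $S$ preserves pullbacks and pushouts and that $T$ preserves pullbacks — and a \emph{structural} group — that $U_{\G}$ and $U_{\A}$ preserve pullbacks — treating the concrete one first since it feeds the other. For $S\colon\Gr\to\Set$: I would use that $\Gr$ is (equivalent to) a presheaf category, so limits and colimits are computed separately on vertices and on edges; hence a pullback (resp.\ pushout) square of graphs restricts to a pullback (resp.\ pushout) square of vertex sets and one of edge sets in $\Set$. Since $S=V+E$ is the coproduct of the vertex and edge functors, it then suffices to observe that in $\Set$ the disjoint union of two pullback squares is again a pullback square (extensivity: coproducts in $\Set$ are disjoint and stable under pullback) and, more easily — colimits commuting with colimits — that the disjoint union of two pushout squares is a pushout square. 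So $S$ preserves pullbacks and pushouts.

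For $T\colon\Mod(\Sp)\to\Set$: each functor ``carrier at sort $s$'', $A\mapsto A_s$, creates limits — the standard fact that a variety is complete with limits formed sortwise in $\Set$ — hence preserves pullbacks; and $T=\coprod_s(-)_s$ over the chosen set of sorts is a disjoint union of such functors, so it preserves pullbacks by the same extensivity argument. I would not assert that $T$ preserves pushouts, and Theorem~\ref{theo:sqpo} does not require it: the pushout step of a SqPO leaves the attribute algebra untouched because $\ag{r}$ preserves attributes.

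The step with real content is that $U_{\G}$ and $U_{\A}$ preserve pullbacks, and here I would construct pullbacks in $\PAG$ explicitly on top of those of $\G$ and $\A$. Given a cospan $\ag{G_1}\to\ag{G_0}\leftarrow\ag{G_2}$ in $\PAG$, put $G=G_1\times_{G_0}G_2$ in $\Gr$ and $A=A_1\times_{A_0}A_2$ in $\Mod(\Sp)$, with their projections. Because $S$ preserves the pullback $G$ and $T$ preserves the pullback $A$, an element of $S(G)$ is the same as a pair $(x_1,x_2)\in S(G_1)\times S(G_2)$ with the same image in $S(G_0)$, and $T(A)\cong T(A_1)\times_{T(A_0)}T(A_2)$. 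Define $\att\colon S(G)\pto T(A)$ by declaring $(x_1,x_2)\in\D(\att)$ exactly when $x_1\in\D(\att_1)$ and $x_2\in\D(\att_2)$; the two values $\att_1(x_1)$ and $\att_2(x_2)$ then automatically agree in $T(A_0)$ — this is forced by the fact that the legs of the cospan are $\PAG$-morphisms — and hence determine $\att(x_1,x_2)\in T(A)$. One then checks that the projections $(G,A,\att)\to(G_i,A_i,\att_i)$ are $\PAG$-morphisms and that the square is universal in $\PAG$: for any competing cone the mediating morphisms on $\G$ and on $\A$ are forced by the pullbacks there, and they assemble into a $\PAG$-morphism because any vertex/edge in the domain of the competing attribution is sent by each leg into $\D(\att_i)$, hence lies in $\D(\att)$, with matching value. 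Since $\Gr$ and $\Mod(\Sp)$ have pullbacks and $S,T$ preserve them, this always works, so $\PAG$ has pullbacks; and by uniqueness of pullbacks up to isomorphism, every pullback square in $\PAG$ has underlying squares in $\G$ and $\A$ that are pullbacks — i.e.\ $U_{\G}$ and $U_{\A}$ preserve pullbacks.

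The main obstacle is this last step, and inside it the delicate point is pinning down $\att$ on the pullback object: because of partiality and the laxness ($\geq$) of $\PAG$-morphisms it is not literally ``componentwise'', so the domain of definition has to be chosen so that the projections are genuine morphisms \emph{and} the universal property still goes through. The remaining checks are routine partial-map bookkeeping, running in parallel to Propositions~\ref{prop:po} and~\ref{prop:fpbc}.
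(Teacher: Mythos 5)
Your proposal is correct, and for the part that carries the real weight it takes a genuinely different route from the paper. For the concrete claims the two arguments essentially agree: the paper also observes that $\Gr$ is the presheaf category $\Func(\C_{\gr},\Set)$ with $S=V+E$ a coproduct of evaluation functors, and that $T=\sum_{s\in S'}(-)_s$ on the variety $\Mod(\Sp)$; if anything you are the more careful of the two, since the paper asserts flatly that $V+E$ and $T$ ``preserve limits'' whereas your extensivity argument correctly isolates why a coproduct of limit-preserving functors into $\Set$ still preserves \emph{pullbacks} (connected limits) even though it does not preserve, say, products or terminal objects. You also rightly note that $T$ need not preserve pushouts and that the theorem never uses this. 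The divergence is in the treatment of $U_{\G}$ and $U_{\A}$: you build pullbacks in $\PAG$ explicitly over the pullbacks in $\Gr$ and $\Mod(\Sp)$, defining $\D(\att)$ on the pullback object as the set of pairs whose both components are attributed, and then invoke uniqueness of pullbacks; the paper instead exhibits left adjoints to $U_{\G}$ and $U_{\A}$ (sending $G$ to $(G,A_{\emptyset},\alpha_{\emptyset})$ and $A$ to $(G_{\emptyset},A,\alpha_{\emptyset})$, with the initial algebra and the empty graph respectively) and concludes that both functors, being right adjoints, preserve all existing limits. The paper's argument is shorter and slicker; yours buys strictly more, namely the \emph{existence} of pullbacks in $\PAG$ together with a usable description of them, which the adjunction argument does not provide (it only says that limits which happen to exist are preserved) and which the paper has to supply separately inside the proof of Proposition~\ref{prop:fpbc}. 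Your identification of the choice of $\D(\att)$ as the delicate point is accurate, and your verification that the two candidate values agree in $T(A_0)$ and that mediating morphisms land in $\D(\att)$ closes the only places where laxness could cause trouble.
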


\begin{proof}
The category of graphs is the functor category $\Func(\C_{\gr},\Set)$ 
where $\C_{\gr}$ is the following small category:
$$ \xymatrix@C=6pc{
v \ar@(ul,dl)_{\id_v} & e \ar@(ur,dr)^{\id_e} \ar@/^/[l]^{t} \ar@/_/[l]_{s} \\ 
} $$
It follows that the category $\Gr$ has limits and colimits, 
in particular POs and PBs. 
The functors $V,E:\Gr\to\Set$ map each graph $G$
to its set of vertices $V(G)$ and its set of edges $E(G)$, respectively, 
and the functor $S:\Gr\to\Set$ is $S=V+E$, 
which maps each graph $G$ to the disjoint union $S(G)=V(G)+E(G)$.
The functors $V$ and $E$ are the \emph{evaluation functors}  
that evaluate each graph $G$, considered as a functor $G:\C_{\gr}\to\Set$, 
at the objects $v$ and $e$ of $\C_{\gr}$, respectively: 
$V(G)=G(v)$ and $E(G)=G(e)$. 
It follows that the functors $V$, $E$ and $S=V+E$ preserve 
limits and colimits, in particular POs and PBs. 

Since $\Mod(\Sp)$ is the category of models of an equational 
specification $\Sp$, it has limits and colimits.
Let $S'$ be some set of sorts of $\Sp$,
i.e., some subset of $S$. 
Let $TA=\sum_{s\in S'} A_s$ be the disjoint union of the 
carriers sets $A_s$ for $s$ in $S'$. 
The functor $T$ is such that $TA=\sum_{s\in S'} A_s$ 
for some set of sorts $S'$ of $\Sp$. 
It preserves limits (but it does not preserve colimits, 
in general), thus it preserves PBs. 

Let $A_{\emptyset}$ be the empty model of $\Sp$,
defined by the fact that all its carriers are empty:
$A_{\emptyset,s}=\emptyset$ for each $s\in S$.
Then $A_{\emptyset}$ is an initial object of the category $\Mod(\Sp)$ 
and $TA_{\emptyset}=\emptyset$. 
The functor $U_{\G}$ has a left adjoint, which maps each 
$G$ to $(G,A_{\emptyset},\alpha_{\emptyset})$ where 
$\D(\alpha_{\emptyset})=\emptyset$. 
Thus, the functor $U_{\G}$ preserves limits, especially PBs. 

Let $G_{\emptyset}$ be the empty graph,
it is an initial object of the category $\Gr$ 
and $S(G_{\emptyset})=\emptyset$.  
The functor $U_{\A}$ has a left adjoint, which maps each 
$A$ to $(G_{\emptyset},A,\alpha_{\emptyset})$ where 
$\D(\alpha_{\emptyset})=\emptyset$. 
Thus, the functor $U_{\A}$ preserves limits, especially PBs. 
\end{proof}

\end{document}